\documentclass[sigconf]{acmart}
\usepackage{bm}
\usepackage{pifont}
\usepackage{algorithm}
\usepackage{subcaption}
\usepackage{multirow}
\usepackage{algorithmic}
\newcommand\numberthis{\addtocounter{equation}{1}\tag{\theequation}}
\newtheorem*{remark}{Remark}
\AtBeginDocument{%
  }

\copyrightyear{2026}
\acmYear{2026}
\setcopyright{cc}
\setcctype{by}
\acmConference[KDD '26]{Proceedings of the 32nd ACM SIGKDD Conference on Knowledge Discovery and Data Mining V.2}{August 09--13, 2026}{Jeju Island, Republic of Korea}
\acmBooktitle{Proceedings of the 32nd ACM SIGKDD Conference on Knowledge Discovery and Data Mining V.2 (KDD '26), August 09--13, 2026, Jeju Island, Republic of Korea}
\acmDOI{10.1145/3770855.3817611}
\acmISBN{979-8-4007-2259-2/2026/08}

\begin{document}

\title{Enforcing Trust Accountability with Backward Propagation}

\author{Wenbo Wu}
\authornote{The corresponding author.}
\orcid{0009-0002-3937-0124}
\affiliation{%
  \institution{University of Southampton}
  \city{Southampton}
  \country{UK}
}
\email{wenbo.wu@soton.ac.uk}

\author{George Konstantinidis}
\orcid{0000-0002-3962-9303}
\affiliation{%
  \institution{University of Southampton}
  \city{Southampton}
  \country{UK}
}
\email{g.konstantinidis@soton.ac.uk}

\begin{abstract}
Trust and reputation management underpins reliable interactions in distributed networks, yet existing trust models rely solely on forward propagation of interaction-based trust signals. They lack robust mechanisms to enforce accountability for the propagated trust signals when negative interactions occur.
In addition, such models often fail to initialize newly joined nodes with sparse interaction history, leading to the cold-start problem.
In this paper, we propose \emph{RepuLink}, a two-layer reputation model that couples an endorsement network with an interaction feedback network. 
RepuLink integrates two concurrent \emph{backward propagation} mechanisms: 
Backward Endorsement Penalty Propagation~(BEPP), which recursively penalizes endorsers of misbehaving nodes, and Backward Endorsement Reward Propagation~(BERP), which rewards endorsers of well-performing nodes. Together, RepuLink enforces endorsement accountability and incentivizes positive behaviors, which form a positive interaction feedback loop.
The endorsement layer further provides explainable, endorser-weighted trust initialization for newly joined nodes. 
Experiments on real-world datasets against representative trust propagation baselines demonstrate that RepuLink outperforms across four evaluation metrics in both interaction-only and full two-layer settings, while preserving comparable efficiency.
\end{abstract}

\begin{CCSXML}
<ccs2012>
   <concept>
       <concept_id>10010147.10010341.10010342</concept_id>
       <concept_desc>Computing methodologies~Model development and analysis</concept_desc>
       <concept_significance>500</concept_significance>
       </concept>
   <concept>
       <concept_id>10003033.10003106.10003114.10003115</concept_id>
       <concept_desc>Networks~Peer-to-peer networks</concept_desc>
       <concept_significance>300</concept_significance>
       </concept>
 </ccs2012>
\end{CCSXML}

\ccsdesc[500]{Computing methodologies~Model development and analysis}
\ccsdesc[300]{Networks~Peer-to-peer networks}

\keywords{Trust, Reputation, Backward propagation, Endorsement, Accountability mechanism, Incentive mechanism}

\maketitle
\newcommand\kddavailabilityurl{https://doi.org/10.5281/zenodo.20289640}
\ifdefempty{\kddavailabilityurl}{}{
\begingroup\small\noindent\raggedright\textbf{Resource Availability:}\\
The source code of this paper has been made publicly available at \url{\kddavailabilityurl}.
\endgroup
}

\section{Introduction}
\label{sec:introduction}
Trust is the foundation of reliable interactions in distributed networks~\cite{wu2025trust}, and is crucial for encouraging active user engagement in applications such as data markets~\cite{ma2024model,fernandez2020data,nguyen2025blockchain} and autonomous vehicles~\cite{he2022security,tan2023digital}.
Trust and reputation management systems~\cite{granatyr2015trust,su2015reliable} have emerged as effective methods to quantify the trustworthiness of participants based on their behavior or interaction history in large, anonymous, distributed settings.

Commonly, a trustworthiness score is first assigned to nodes based on individual evaluations or interactions, and then aggregated to calculate node reputations. 
Simple trust aggregation systems, such as those discussed in~\cite{josang2007survey} or those found in bilateral rating platforms (\textit{e.g.}, Uber and Airbnb), typically rely on averaging received scores for each node. While effective at identifying peers with very high or very low reputations, such approaches lack the ability to capture transitive trust relationships within the network.

More advanced algorithms address this limitation by \emph{propagating} trust signals through the network~\cite{jiang2016understanding}: the evaluation of an entity $B$ by another entity $A$ is weighted by $A$'s own reputation, applied recursively. 
PageRank~\cite{page1999pagerank} is a core algorithm to perform this task, and other approaches include EigenTrust~\cite{kamvar2003eigentrust} and PowerTrust~\cite{zhou2007powertrust}. 
The most recent trust models in this space are AbsoluteTrust~\cite{awasthi2020absolutetrust} and ShapleyTrust~\cite{bandhana2024trust}, which explore meaningful trust evaluation and propagation methods.

\begin{figure}[t]
    \centering
    \includegraphics[width=0.8\linewidth]{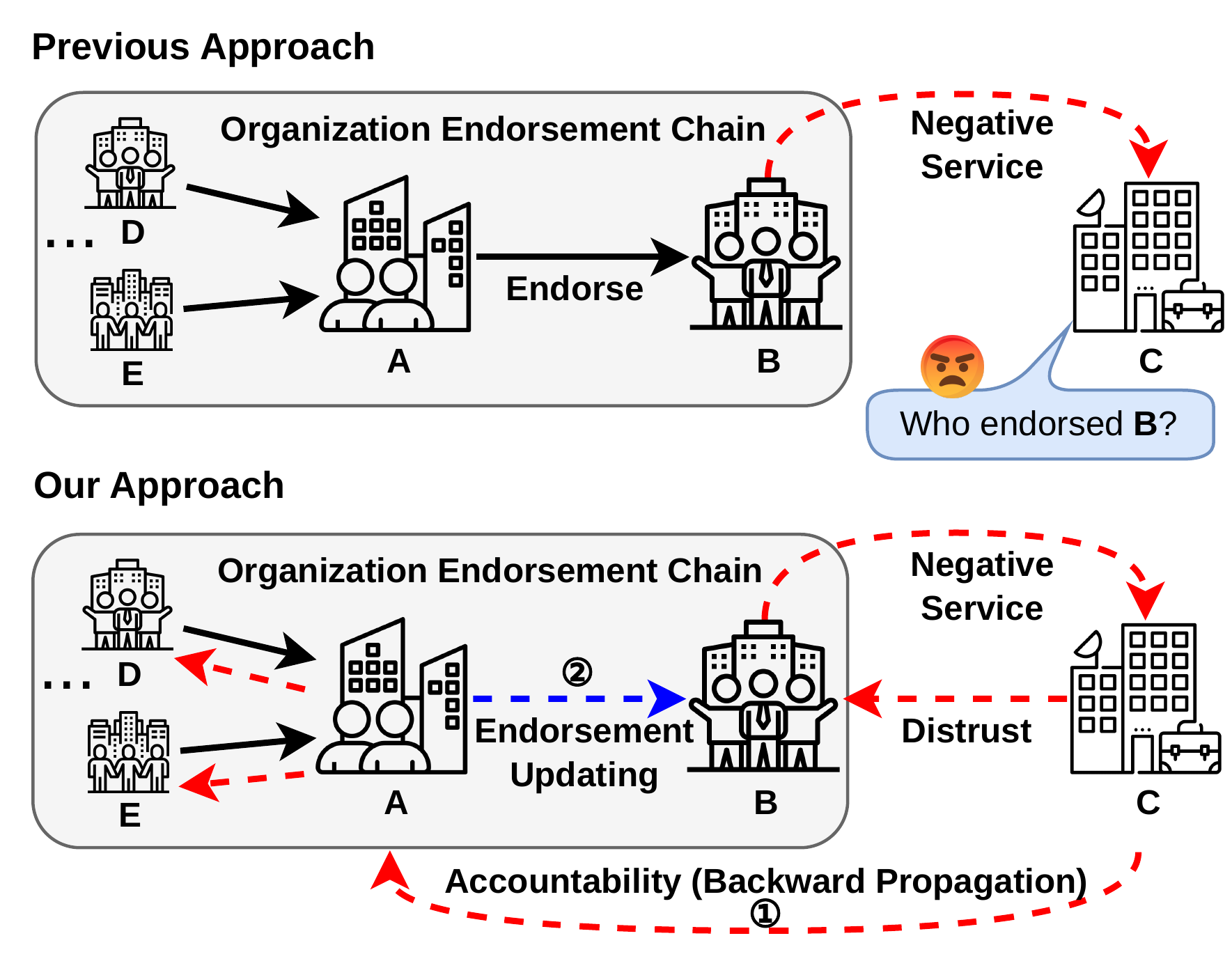}
    \caption{Traditional Trust vs. Accountable Trust}
    \label{fig:example}
\end{figure}

Despite these advances, existing trust models exhibit three key limitations.
First, they primarily rely on interaction-based node evaluation and often lack support for domain-specific knowledge that can capture known trust relationships (\textit{e.g.}, social information, institutional endorsement within an organization or a consortium).
Second, these systems face a well-known \emph{cold-start problem}~\cite{feng2021rbpr,bampatsikos2021solving}: they cannot reliably evaluate newly joined or sparsely connected nodes that lack sufficient interaction history. Leveraging explicit endorsement knowledge can address this gap by assigning explainable initial reputation scores proportional to the endorsers' own reputations.
Third, existing trust models rely exclusively on \emph{forward} propagation of (positive) trust signals and lack mechanisms for \emph{accountability} of the propagated trust signals when negative interactions occur.

Consider a scenario as in Fig.~\ref{fig:example}: organization~A endorses organization~B based on institutional knowledge or domain-specific information; subsequently, organization~C had a negative service experience with~B.
This conflict makes it essential to trace accountability back through~B's endorsement chain, since B's endorsers should be held accountable for their endorsements. The absence of such a backward accountability mechanism can lead to skewed reputation scores and render systems vulnerable to malicious collusive behaviors, where groups of nodes strategically post positive feedback to each other to inflate reputations without consequence.

\begin{table}[t]
    \centering
    \caption{Comparison of existing trust/reputation models.}
    \label{tab:role-specific-pagerank}
    \resizebox{\columnwidth}{!}{%
    \begin{tabular}{lcccc}
        \toprule
        \textbf{Algorithm} & 
        \begin{tabular}[c]{@{}c@{}}\textbf{Forward} \\ \textbf{Propagation}\end{tabular} & 
        \begin{tabular}[c]{@{}c@{}}\textbf{Incentive} \\ \textbf{Mechanism}\end{tabular} & 
        \begin{tabular}[c]{@{}c@{}}\textbf{Accountability} \\ \textbf{Mechanism}\end{tabular}& 
        \begin{tabular}[c]{@{}c@{}}\textbf{Meaningful} \\ \textbf{Initialization}\end{tabular}\\
        \midrule
        PageRank \cite{page1999pagerank} & \ding{51} & \ding{55} & \ding{55} & \ding{55} \\
        EigenTrust \cite{kamvar2003eigentrust} & \ding{51} & \ding{55} & \ding{55} & \ding{55} \\
        PowerTrust \cite{zhou2007powertrust} & \ding{51} & \ding{55} & \ding{55} & \ding{55} \\
        AbsoluteTrust \cite{awasthi2020absolutetrust} & \ding{51} & \ding{55} & \ding{55} & \ding{55} \\
        ShapleyTrust  \cite{bandhana2024trust} & \ding{51} & \ding{55} & \ding{55} & \ding{55} \\
        \textbf{RepuLink}   & \ding{51} & \ding{51} & \ding{51} & \ding{51}  \\
        \bottomrule
    \end{tabular}}
\end{table}

In this paper, we introduce \emph{RepuLink}, a novel reputation model that addresses previous limitations (see Table~\ref{tab:role-specific-pagerank}). 
RepuLink is a two-layered model that combines a domain-specific endorsement network with an interaction feedback network. 
A key novelty lies in the two concurrent backward propagation mechanisms:
(1) a penalizing/accountability mechanism, \emph{Backward Endorsement Penalty Propagation}~(BEPP), which recursively penalizes endorsers whose endorsees receive negative interaction feedback; and
(2) a rewarding/incentive mechanism, \emph{Backward Endorsement Reward Propagation}~(BERP), which recursively rewards endorsers whose endorsees demonstrate consistently positive performance.

Our contributions are as follows:
\begin{itemize}
    \item A formal, two-layered reputation model that integrates endorsements with interaction-based feedback, enabling richer trust assessment (Section~\ref{sec:model-design}).
    
    \item An effective solution to the cold-start problem that leverages the endorsement layer to assign explainable initial reputation values to newly joined nodes, proportional to the collective reputation of their direct and indirect endorsers (Section~\ref{subsec:node arrival and departure}).

    \item Two novel backward propagation mechanisms (BEPP and BERP) that dynamically and recursively propagate penalties and rewards through the endorsement chain, providing practical accountability and incentives. An ablation study confirms that BERP drives large-scale rank reordering while BEPP acts as a targeted corrective mechanism (Section~\ref{subsec:impact of interactions}).

    \item An extensive experimental evaluation on real-world datasets against five other trust models. RepuLink consistently outperforms all baselines in both interaction-only and full two-layer settings across four major evaluation metrics, with comparable efficiency (Section ~\ref{sec:experiments}).
    
    \item A formal analysis of RepuLink, including proofs of convergence for both forward and backward propagation, convergence speed analysis, and computational complexity, is provided in supplementary materials (Appendix~\ref{sec:proof-of-convergence},~\ref{sec:convergence-speed-analysis},~\ref{sec:computational-complexity-analysis}).
\end{itemize}

\begin{figure*}
    \centering
    \includegraphics[width=0.8\linewidth]{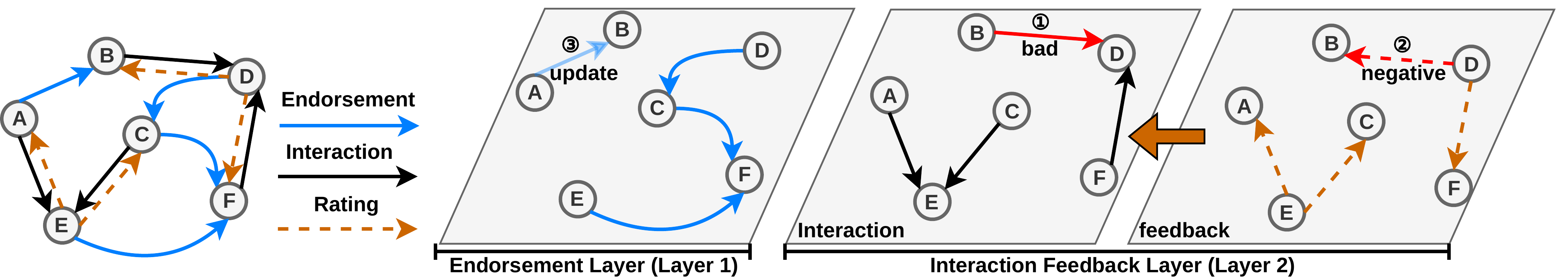}
    \caption{RepuLink Model.
    B performed badly in Layer~2 and received negative feedback from~D.
    A as the endorser of B will be penalized through the accountability mechanism.
    The endorsement confidence from A toward~B will be updated in Layer~1.}
    \label{fig:repulink}
\end{figure*}

\section{The RepuLink Model}
\label{sec:model-design}
This section presents the formal foundations of RepuLink, including the computation of reputation and the backward propagation of endorsement accountability.
A \emph{trust and reputation network} is a two-layer directed graph $\mathcal{G} = (\mathcal{V}, \mathcal{E}, \mathcal{F})$, where $\mathcal{V}$ is a set of $N$ nodes, and $\mathcal{E}$ and $\mathcal{F}$ are two layers of directed edges.
The first layer, $\mathcal{E} \subseteq \mathcal{V} \times \mathcal{V}$, consists of \emph{endorsement} relationships; intuitively, an edge $(i,j)$ is in $\mathcal{E}$ if node $i$ endorses the integrity of node $j$.
An \emph{endorsement confidence} function $\epsilon:\mathcal{E} \rightarrow [0,1]$ assigns to each edge $(i,j) \in \mathcal{E}$ a weight, $\epsilon(i,j)$, that quantifies the endorsement strength from node $i$ to node $j$.
The second layer, $\mathcal{F} \subseteq \mathcal{V} \times \mathcal{V}$, captures the \emph{interaction feedback} between nodes. An interaction feedback mapping function, $f:\mathcal{F} \rightarrow \mathbb{R}_{\geq 0} \times \mathbb{R}_{\geq 0}$, assigns to each edge $(i,j) \in \mathcal{F}$ a pair of scores, $f(i,j) = (p_{ij}, n_{ij})$, where $p_{ij}$/$n_{ij}$ represents the cumulative positive/negative interaction feedback that node $i$ gives to node $j$.
The \emph{reputation value} of a node is constructed by integrating knowledge from both the endorsement $\mathcal{E}$ and interaction feedback $\mathcal{F}$ networks, as in Fig.~\ref{fig:repulink}.

\subsection{Endorsement Network}
Given an endorsement layer $\mathcal{E}$ and its endorsement confidence function $\epsilon$, the endorsement strengths are represented by a matrix $\hat{E} \in \mathbb{R}^{N \times N}$, where an element $\hat{E}_{ij} = \epsilon(i,j)$ if $(i,j) \in \mathcal{E}$, and $\hat{E}_{ij} = 0$ otherwise.
A value of $\hat{E}_{ij}=0$ indicates no endorsement from node \(i\) (endorser) to node \(j\) (endorsee) (i.e., $(i,j) \notin \mathcal{E}$), while $\hat{E}_{ij}=1$ indicates a strong endorsement.
We normalize the endorsement as:
\begin{equation} \label{eq:normalised endorsement}
E_{ij} = \frac{\hat{E}_{ij}}{\sum_{k} \hat{E}_{ik} + c},
\end{equation}
where \(E \in \mathbb{R}^{N \times N}\) and \(\Vert E^{\top} \Vert_1 = 1\). Here, \(c\) is a small positive constant (\(0 < c \ll 1\)) introduced to ensure numerical stability and prevent division-by-zero errors.

\subsection{Interaction Feedback Network}
The interaction feedback layer $\mathcal{F}$, as defined, records cumulative feedback scores that are aggregated over potentially many individual interactions.
By aggregating the interaction feedback from node \(i\) (reviewer) to node \(j\) (reviewee), we get $f(i,j) = (p_{ij},n_{ij})$.
The \textit{trustworthiness} (which is also known as \textit{local trust} in \cite{kamvar2003eigentrust,awasthi2020absolutetrust}) of node \(j\) from the perspective of node \(i\), denoted $\hat{T}_{ij}$, is calculated as:
\begin{equation}
    \hat{T}_{ij} = \frac{p_{ij} - n_{ij}}{p_{ij} + n_{ij} + c}.
    \label{eq:trustworthiness}
\end{equation}
Note that, consistent with prior works such as~\cite{kamvar2003eigentrust}, this model imposes a significant penalty for negative interactions. Specifically, the negative feedback score $n_{ij}$ not only decreases the numerator of Eq.~\eqref{eq:trustworthiness}, but is also added to the denominator.
Before propagating trustworthiness, each node performs a local normalization to prevent arbitrary assignments of untruthful trustworthiness values. We define the \textit{normalized trustworthiness} as:
\begin{equation}
    T_{ij} = \frac{\max(\hat{T}_{ij},0)}{\sum_k \max(\hat{T}_{ik},0) + c},
    \label{eq:normalized_trustworthiness}
\end{equation}
where \(T \in \mathbb{R}^{N \times N}\) and \(\Vert T^{\top} \Vert_1 = 1\).

\subsection{Reputation Function}
\label{subsec:reputation function}
In our model, the \textit{reputation} of a network node (similar to what is known as \textit{global trust} in \cite{kamvar2003eigentrust,awasthi2020absolutetrust}) is maintained by integrating knowledge from both the endorsement and the interaction layers. This integration is achieved via a weighted sum, with a parameter $\alpha \in [0,1]$ that balances the contributions: $\alpha$ weights the knowledge from interactions, while $(1-\alpha)$ weights that from endorsements.

The vector of node reputations at time $t$ is denoted by $R^{(t)} = (R_1^{(t)}, R_2^{(t)}, \dots, R_N^{(t)})^\top \in \mathbb{R}^N$.
At the initial stage (\(t = 0\)), there is an absence of interaction history. This is the \textit{cold-start} issue, commonly encountered by most trust and reputation management systems~\cite{tahta2015gentrust,meng2020truetrust}.
To alleviate this issue, we initialize the reputation of node $j$, taking only endorsements into account in a uniform way: $R_j^{(0)} = (1-\alpha) \sum_{k \in [1,N]} E_{kj} \frac{1}{N}$.
The reputation of node \(j\) at $t+1$ is:
\begin{small}
\begin{align*}
R_j^{(t+1)}
&= \alpha \sum_{k \in [1,N]} T_{kj} R_k^{(t)}
+ (1 - \alpha) \sum_{k \in [1,N]} E_{kj} R_k^{(t)} \\[8pt]
&= \alpha [T^\top R^{(t)}]_j + (1 - \alpha) [E^\top R^{(t)}]_j \\[8pt]
&= \left[ (\alpha T^\top + (1 - \alpha) E^\top) R^{(t)} \right]_j
= \left[W R^{(t)}\right]_j. \numberthis
\end{align*}
\end{small}
The vector form is succinctly expressed as:
\begin{small}
\begin{equation}
    R^{(t+1)} = W R^{(t)}, \quad \text{where} \quad W = \alpha T^\top + (1 - \alpha)E^\top.
    \label{eq:reputation}
\end{equation}
\end{small}
At any timepoint $t$, to ensure comparability and numerical stability, we project the reputation vector onto the non-negative \( \ell_1 \) simplex. That is, we always override the latest reputation vector by:
\begin{small}
\begin{equation}
    R^{(t)} = \frac{\max\left( R^{(t)}, 0 \right)}{\left\| \max\left( R^{(t)}, 0 \right) \right\|_1 + c}.
    \label{eq:reputation-normalization}
\end{equation}
\end{small}

In dynamic network environments, node behavior is inherently transient: nodes join and depart, endorsements evolve, and interactions continuously generate new feedback. Consequently, the reputation calculation via Eq.~\eqref{eq:reputation} must adapt, consistently utilizing the latest effective endorsement matrix \( E \) and the normalized trustworthiness matrix \( T \) to reflect current network conditions.
We detail the processes for handling node arrivals and departures in Sec.~\ref{subsec:node arrival and departure}. Subsequently, we focus on internode interactions, examining the distinct impacts of positive and negative feedback in Sec.~\ref{subsec:impact of interactions}.

\subsection{Node Arrival and Departure}
\label{subsec:node arrival and departure}
\noindent
\textbf{Arrivals of New Nodes.}
At any time slot \(t\), new nodes may freely join the network. To assign a fair and meaningful initial reputation for these newly joined nodes, we extend the endorsement network to take into account the domain-specific knowledge about node relationships and overcome the cold-start issue.
RepuLink assigns a meaningful initial reputation to new nodes that is directly proportional to the collective reputation of their endorsers.
This design ensures that nodes endorsed by high-reputation nodes receive a high initial score, while those backed by low-reputation nodes are assigned a correspondingly low score. We empirically validate this contextual assignment mechanism in Sec.~\ref{subsec:effectiveness evaluation}.
Formally, given a trust and reputation network $\mathcal{G} = (\mathcal{V}, \mathcal{E}, \mathcal{F})$, a newly joined node $j \notin \mathcal{V}$, a set of existing nodes $S_j \subseteq \mathcal{V}$ that endorse $j$, and a confidence function $\epsilon':S_j \rightarrow [0,1]$ that captures these endorsements, our trust and reputation network is updated as:

\textit{Step 1: Endorsement Set Construction.}
Add the new node $j$ to $\mathcal{V}$ (i.e., $\mathcal{V} := \mathcal{V} \cup \{j\}$). The endorsement layer $\mathcal{E}$ and confidence function $\epsilon$ are updated to include the new endorsements for $j$:
$\mathcal{E} := \mathcal{E} \cup \{ (i,j) \mid i \in S_j \}$;
$\epsilon := \epsilon \cup \epsilon'$.

\textit{Step 2: Initial Reputation Calculation.}
To assign an initial reputation for node $j$ based on endorsements, we utilize the endorsers' reputations at the current time slot $t$:
$R_j^{(t)} = (1-\alpha) \sum_{i \in S_j} E_{ij} R_i^{(t)}$,
where \( E_{ij} \) is the most current normalized endorsement from node \( i \) to node \( j \), and \( R_i^{(t)} \) is the reputation of endorser \( i \) at the current time slot $t$.
The overall reputation vector $R^{(t)}$, now including $R_j^{(t)}$, is then normalized via Eq.~\eqref{eq:reputation-normalization}.

\noindent
\textbf{Departure of Nodes.}
In decentralized networks, node departure is a common phenomenon. Our model handles this by preserving essential information in case of nodes rejoining the network.
When a node leaves the network, its identity is not removed but marked as \textit{deactivated}.
While a node is deactivated, it is excluded from consideration in the reputation computations of active network participants.
All historical interaction feedback data associated with the node are retained to preserve the integrity of the reputation. Concurrently, endorsement relationships originating from or directed to the departing node are revoked. This reflects the principle that trust relations should not persist for inactive entities and ensures that no endorsement-based influence can propagate from a node that is no longer active.

This design prevents the exploitation of dormant endorsements in future reputation updates. Also, by preserving deactivated identities, we prevent nodes re-entering the network under new identities to escape past negative behaviors. If a node attempts to rejoin the network, its previous identity can be reactivated or linked, thereby maintaining continuity in reputation evolution. The reactivated node may regain reputation gradually, but it cannot bypass the accountability mechanism, as detailed in Sec.~\ref{subsec:impact of interactions}.

\subsection{Impact of Interactions}
\label{subsec:impact of interactions}
The novelty of our model is the fine-grained impact that interactions have on the network.
In particular, a positive (resp. negative) interaction feedback from node $i$ to $j$ affects both the endorsement confidence of the endorsement relationship and the entire reputation network by ``backwards propagating'' these rewards (and penalties) to $j$'s endorsers recursively.

In our model, every single interaction between $i$ and $j$ can result in positive or negative feedback scores from $i$ to $j$ that are in $[-10,10]$. These scores are then aggregated to update the cumulative values $p_{ij}$ and $n_{ij}$, which are subsequently used in the trustworthiness calculations with Eq.~\eqref{eq:trustworthiness} and~\eqref{eq:normalized_trustworthiness}. 

\noindent
\textbf{Penalties for Negative Interactions.} If a node $i$ experiences a negative interaction with node $j$, penalties are applied as follows:

\noindent
\textit{Step 1: Endorsement Penalty Signal.}
To incorporate the impact of negative feedback on endorsements, we first quantify the aggregated negative feedback for each node. The total negative feedback score for a node $j$, denoted $\mathcal{N}_j$, is the sum of all negative feedback scores it has received: $\mathcal{N}_j = \sum_i n_{ij}$. These scores constitute a vector $\mathcal{N} \in \mathbb{R}^N$.
An \textit{endorsement penalty signal} is defined as:
$g(\mathcal{N}_{j}) = e^{-\beta\cdot \mathcal{N}_{j}}, \beta > 0,$
where $\beta$ is the penalty sensitivity coefficient. 
This function ensures that the penalty signal decreases exponentially as $\mathcal{N}_{j}$ increases.

\noindent
\textit{Step 2: Backward Endorsement Penalty Propagation.}
The endorsers of misbehaving nodes are penalized via the BEPP accountability mechanism. 
Let \( \pi \in \mathbb{R}^{N} \) denote the endorser penalty vector, where \( \pi_i \) represents the total penalty to node \( i \) due to the misbehavior of its direct and indirect endorsees. 
The penalty is propagated upstream through the endorsement chain, discounting at each hop.
We calculate the penalty via a multi-hop recursive formulation:
\begin{small}
\begin{align*}
\pi 
&= \gamma E (\mathbf{1}-g(\mathcal{N})) + \gamma^2 E^2 (\mathbf{1}-g(\mathcal{N})) + \gamma^3 E^3 (\mathbf{1}-g(\mathcal{N})) + \cdots \\
&= \sum_{k=1}^{K} \gamma^k E^k (\mathbf{1}-g(\mathcal{N}))  = \gamma E (I - \gamma E)^{-1} (\mathbf{1}-g(\mathcal{N})), \numberthis
\label{eq:endorser penalty}
\end{align*}
\end{small}
where \( \gamma \in (0,1) \) is the discount factor that controls the decay of the penalty with increasing distance from the direct endorsers. The index \( k \) denotes the number of hops in the endorsement network over which the penalty is propagated. 
This formulation assigns greater responsibility to direct endorsers while diminishing the influence of indirect ones.
We introduce a convergence threshold \( \delta \) to stop the propagation when penalty becomes negligible. Specifically, if \( \pi^{(k)} \) denotes the penalty vector with \( k \) hops, the propagation is terminated once:
$\|\pi^{(k+1)} - \pi^{(k)}\|_1 < \delta,$
where \( \delta \) is a small positive constant. 
Alternatively, the propagation process can be terminated once a predefined maximum propagation bound $K$ is reached, ensuring that the propagation remains computationally feasible. 
This also applies to the backward endorsement reward propagation.

\noindent
\textbf{Rewards for Positive Interactions.}
While penalizing endorsers for poor endorsements improves accountability, it is equally important to reward those who endorse well-performing nodes. We now introduce the backward endorsement reward propagation (BERP) mechanism, designed to incentivize nodes to endorse those who are likely to bring positive impact to the network.
If node $i$ experiences a positive interaction with node $j$, the rewards are applied to both $j$ and its endorsers:

\noindent
\textit{Step 1: Endorsement Reward Signal.}
We use $\mathcal{P} \in \mathbb{R}^{N}$ to denote the accumulated positive feedback score vector, $\mathcal{P}_{j} = \sum_{i} p_{ij}$ denotes the accumulated positive feedback score of node $j$.
The \textit{endorsement reward signal} is defined as:
$r(\mathcal{P}_{j}) = 2 - e^{-\lambda \cdot \mathcal{P}_j}, \lambda > 0$,
where \( \lambda \) is a reward sensitivity coefficient. 

\noindent
\textit{Step 2: Backward Endorsement Reward Propagation.}
The endorsers of the well-performing nodes are rewarded by the BERP incentive mechanism. 
Let \( \rho \in \mathbb{R}^{N} \) denote the reward vector, where \( \rho_i \) represents the total reward to node \( i \) because of the honest behavior of its endorsees. The endorsement reward signal is propagated upstream through the endorsement network, discounting at each hop. Let \( r(\mathcal{P})^{\top} \in \mathbb{R}^N \) be the endorsement reward signal, the recursive backward endorsement reward vector \( \rho \in \mathbb{R}^N \) is:
\begin{small}
\begin{align*}
\rho 
&= \gamma E (r(\mathcal{P})-\mathbf{1}) + \gamma^2 E^2 (r(\mathcal{P})-\mathbf{1})  + \gamma^3 E^3 (r(\mathcal{P})-\mathbf{1}) + \cdots \\
&= \sum_{k=1}^{\infty} \gamma^k E^k (r(\mathcal{P})-\mathbf{1}) = \gamma E (I - \gamma E)^{-1} (r(\mathcal{P})-\mathbf{1}), \numberthis
\label{eq:endorser rewards}
\end{align*}
\end{small}
where \( \gamma \in (0,1) \) is the discount factor that controls the decay of the reward with increasing distance from the direct endorsers.

\noindent
\textbf{Endorsement Updating.}
After the penalties and rewards are calculated, RepuLink then updates the endorsement confidence from endorsers towards endorsees. Ideally, the endorsement can be manually adjusted by network users themselves, or updated based on the endorsement penalty/reward signals. Here, we will focus on the second one and introduce the automatic update strategy. The adjusted endorsement matrix $\hat{E}$ is computed as:
$\hat{E}= \hat{E} \circ (\mathbf{1} \cdot (g(\mathcal{N}) \circ r(\mathcal{P}))^\top)$,
where $\mathbf{1}$ is an $N \times 1$ column vector of ones, $(g(\mathcal{N}) \circ r(\mathcal{P}))^\top$ is the transpose of $g(\mathcal{N}) \circ r(\mathcal{P})$, and $\circ$ denotes the element-wise product. This operation effectively means each element $\hat{E}_{ij}$ is updated as $\hat{E}_{ij} \cdot g(\mathcal{N}_j) \cdot r(\mathcal{P}_j)$.
Subsequently, the updated endorsement matrix $\hat{E}$ is re-normalized using Eq.~\eqref{eq:normalised endorsement}, and then used for reputation updating. 

With the updated endorsement matrix \( E \) established, and the endorser penalty vector \( \pi \) and reward vector \( \rho \) defined, all necessary components for determining the final reputation for the subsequent time slot are available. 

\subsection{Reputation Updating}
\label{subsec:Reputation_Updating} 

\begin{algorithm}[t]
\caption{Iterative Reputation Update}
\label{alg:iterative_reputation_update}
\begin{algorithmic}[1] 
    \STATE \textbf{Input:} $R^{(t)}$, $T$, $E$, $\mathcal{P}$, $\mathcal{N}$
    \STATE \textbf{Parameters:} $\alpha$, $c$
    \STATE \textbf{Output:} $R^{(t+1)}$

    \STATE $\hat{T} \gets \text{Trustworthiness}$ \hfill \text{via Eq.~\eqref{eq:trustworthiness}}
    \STATE $T \gets \text{Normalized Trustworthiness}(\hat{T}, c)$ 
    \hfill \text{via Eq.~\eqref{eq:normalized_trustworthiness}}
    \STATE $g(\mathcal{N}),r(\mathcal{P}) \gets \text{Endorsement Penalty/Reward Signals}$
    \STATE $\pi,\rho \gets \text{Endorsement Penalty/Reward} $  \hfill \text{via Eqs.~\eqref{eq:endorser penalty}~\eqref{eq:endorser rewards}}
    \STATE $E \gets \text{Endorsement Updating}(g(\mathcal{N}),r(\mathcal{P}))$ 
    \STATE $\hat{R}^{(t+1)} \gets R^{(t)} - \pi + \rho \quad \text{Apply Penalty/Reward}$
    \STATE $\tilde{R}^{(t+1)} \gets \text{Reputation Updating} (T,E,\hat{R}^{(t+1)})$  \hfill \text{via Eq.~\eqref{eq:reputation}}
    \STATE $R^{(t+1)} \gets \text{Reputation Normalization}(\tilde{R}^{(t+1)}, c)$  \hfill \text{via Eq.~\eqref{eq:reputation-normalization}}
    \RETURN $R^{(t+1)}$
\end{algorithmic}
\end{algorithm}

The iterative reputation update process (as illustrated in Alg.~\ref{alg:iterative_reputation_update}) starts by reassessing and normalizing pairwise trustworthiness \(T\) from the latest interaction feedback (Lines 4,5 in Alg.~\ref{alg:iterative_reputation_update}). 
Subsequently, a backward propagation mechanism refines the intermediate reputation by applying penalties \( \pi \) and rewards \( \rho \) to yield an adjusted vector \( \hat{R}^{(t+1)}\) (Lines 6,7,9 in Alg.~\ref{alg:iterative_reputation_update}).
The updated trustworthiness \(T\), along with the updated endorsement matrix \(E\), then informs a forward propagation step to compute the reputation vector \( \tilde{R}^{(t+1)} \) (Lines 5,8,10 in Alg.~\ref{alg:iterative_reputation_update}).  
Finally, the reputation vector \( \tilde{R}^{(t+1)} \) is normalized (Line 11 in Alg.~\ref{alg:iterative_reputation_update}) to produce the definitive and consistently-scaled reputation scores \( R^{(t+1)} \) for the next timeslot. 

\begin{table}[t]
\centering
\caption{Summary statistics of trust datasets}
\label{tab:datasets}
\resizebox{0.8\columnwidth}{!}{%
\begin{tabular}{|l|cccc|}
\bottomrule
\textbf{Dataset}     & \textbf{\#Nodes} & \textbf{\#Edges} & \textbf{Range} & \textbf{\% Positive} \\
\toprule \hline
Bitcoin OTC          & $5,881$            & $35,592$           & $[-10, 10]$     & $89\%$          \\
Bitcoin Alpha        & $3,783$            & $24,186$           & $[-10, 10]$       & $93\%$        \\
Epinions             & $75,879$          & $508,837$          &    $/$     &     $/$    \\
Synthetic             & $5,000$          & $50,006$          &    $/$     &     $/$    \\
\toprule
\end{tabular}}
\end{table}

\section{Experiments}
\label{sec:experiments}
This section evaluates the performance of our proposed reputation model under various experimental settings on real-world datasets and compares our results against state-of-the-art trust models (see Sec.~\ref{sec:introduction} and~\ref{sec:related-works}) presented in Tab.~\ref{tab:role-specific-pagerank}. 
All experiments were conducted on an Apple machine equipped with an \textit{M1 Max} chip and 32\,GB of RAM. One comparative model, \textit{ShapleyTrust}, exhibited excessive running times and consistently timed out on this setup. To ensure its termination for a complete correctness comparison, \textit{ShapleyTrust} was specifically executed on a high-performance computing cluster node featuring an \textit{Intel Xeon Gold 6138 CPU} and 80\,GB of RAM.

\subsection{Experimental Setup}
\label{subsec:experimental_setup}
\textbf{Datasets.}
We conduct experiments on two public trust datasets derived from peer-to-peer marketplaces: \textit{Bitcoin-OTC} and \textit{Bitcoin-Alpha}~\cite{kumar2016edge,kumar2018rev2} that have been used as benchmark datasets for related work in this problem.
Bitcoin-OTC~\footnote{https://snap.stanford.edu/data/soc-sign-bitcoin-otc.html} is a directed, weighted network constructed from an over-the-counter trading platform, where each edge represents a subjective trust rating similar to our interaction feedback. 
Bitcoin-Alpha~\footnote{https://snap.stanford.edu/data/soc-sign-bitcoin-alpha.html} exhibits a similar structure but originates from a different Bitcoin forum-based marketplace.
As these two datasets do not include explicit endorsement relationships, we incorporate the \textit{Epinions} dataset~\cite{richardson2003trust} to simulate the \emph{endorsement network}. The Epinions~\footnote{https://snap.stanford.edu/data/soc-Epinions1.html} dataset is suitable for simulating endorsements as it is a who-trust-whom online social network of a general consumer review site Epinions.com.
To construct the Layer 1 endorsement network, we align the Bitcoin datasets with the Epinions network by mapping nodes based on shared user identities.
Epinions edges are unweighted, so their initial endorsement confidence value is set to $1$. This value serves as a basis for initial reputation assignments and will be updated based on node interactions.

\noindent
\textbf{Competitor Approaches.}
We compare against five trust models discussed in Sec.~\ref{sec:related-works} (also Tab.~\ref{tab:role-specific-pagerank} in Sec.~\ref{sec:introduction})
mainly utilizing interaction networks for trust propagation and evaluation. Our implementations adhere to the methodologies presented in the original publications. 
Notably, since exact ShapleyTrust calculation is computationally intractable, we employ a polynomial sampling approach~\cite{castro2009polynomial} with a sample size of 500 for Shapley value estimation, balancing accuracy with computational feasibility.

\noindent
\textbf{Evaluation Metrics.} We select four widely-used metrics as below:

\emph{(1) Area Under the ROC Curve (AUC).} Assesses overall ranking quality globally, as it measures the ability of a scoring function to correctly rank high-reputation users above low-reputation users. It is computed based on the Receiver Operating Characteristic (ROC) curve, which plots the True Positive Rate (TPR) versus the False Positive Rate (FPR) at varying thresholds. Using the trapezoidal rule for approximation, it is calculated as:
$\text{AUC} \approx \sum_{i=1}^{n-1} \frac{(\text{FPR}_{i+1} - \text{FPR}_i)(\text{TPR}_{i+1} + \text{TPR}_i)}{2}$.
The area under the ROC curve quantifies the ranking quality, where AUC = 1 indicates perfect discrimination, and AUC = 0.5 denotes random guessing.

\emph{(2) Precision@K (PK).} Focuses on the practical effectiveness of identifying the most reputable users at the top of the list. It evaluates the proportion of true high-reputation users retrieved among the top-$K$ ranked users. 
$\text{Precision@}K = \frac{|\text{Top-}K \cap \text{High-Reputation Users}|}{K}$.
This metric emphasizes the algorithm's accuracy in identifying high-reputation users within the highest-ranked positions.

\emph{(3) Kendall's Tau Rank Correlation (KT).} Evaluates the consistency of the predicted rankings compared to the ground truth, measuring the ordinal association between the predicted ranking and the ground-truth ranking. It is defined as:
$\tau = \frac{n_c - n_d}{\frac{1}{2}n(n - 1)}$
where $n_c$ is the number of concordant pairs, $n_d$ is the number of discordant pairs, and $n$ is the total number of nodes. A value of $\tau = 1$ indicates perfect agreement, while $\tau = 0$ implies no correlation (random order).

\emph{(4) Spearman's Rank Correlation Coefficient (SRC).} Measures the strength and direction of the monotonic relationship between the predicted reputation scores (or ranks) and the ground-truth ranks. It is computed as:
$\rho = 1 - \frac{6 \sum_{i=1}^{n} d_i^2}{n(n^2 - 1)}$
where $d_i$ is the difference between the predicted rank and the true rank for node $i$, and $n$ is the total number of nodes. A higher $\rho$ (closer to 1) indicates a stronger positive monotonic correlation in ranking, while $\rho$ closer to -1 indicates a stronger negative monotonic correlation.

\begin{figure*}
    \centering
    \includegraphics[width=0.9\linewidth]{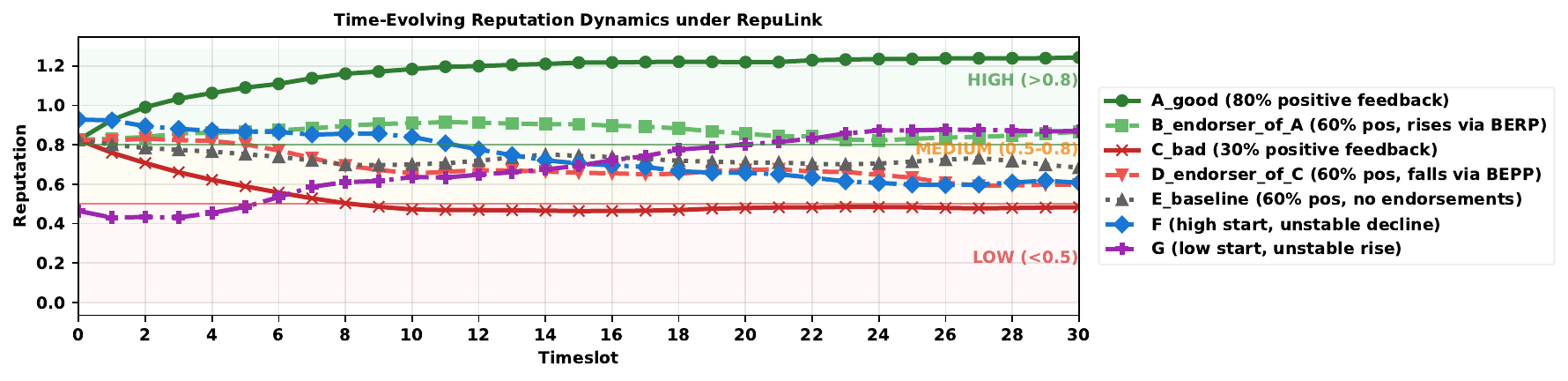}
    \caption{Time-evolving reputation dynamics under RepuLink over $30$
    time slots. Display values are sqrt-compressed so the uniform
    initial state maps to $0.8$; tiers are HIGH ($>0.8$), MEDIUM
    ($0.5$--$0.8$), LOW ($<0.5$).}
    \label{fig:reputation_dynamics}
\end{figure*}

\begin{figure*}[ht]
    \centering
    \begin{subfigure}{0.3\linewidth}
        \includegraphics[width=\linewidth]{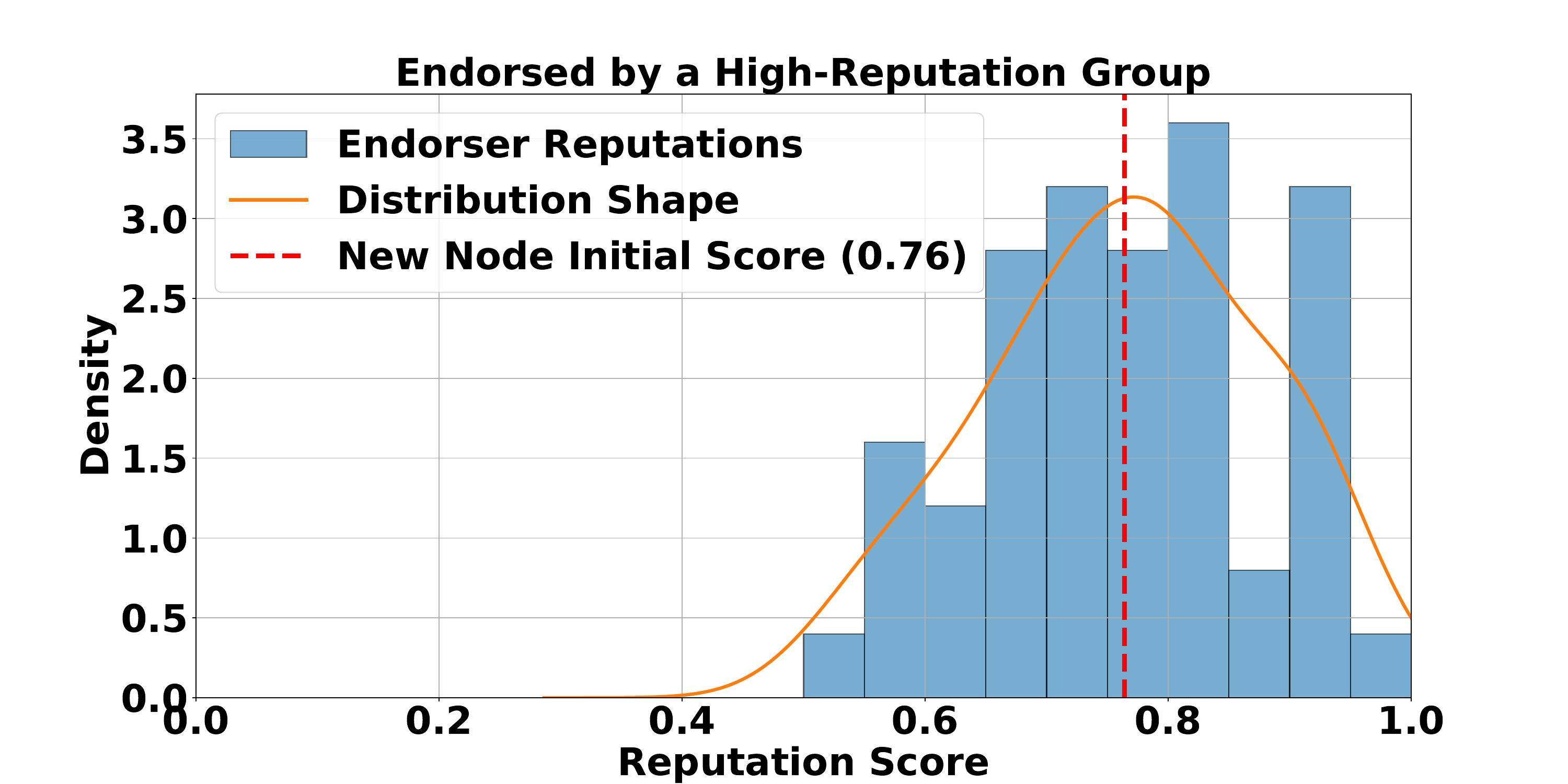}
    \end{subfigure}
    \hfill
    \begin{subfigure}{0.3\linewidth}
        \includegraphics[width=\linewidth]{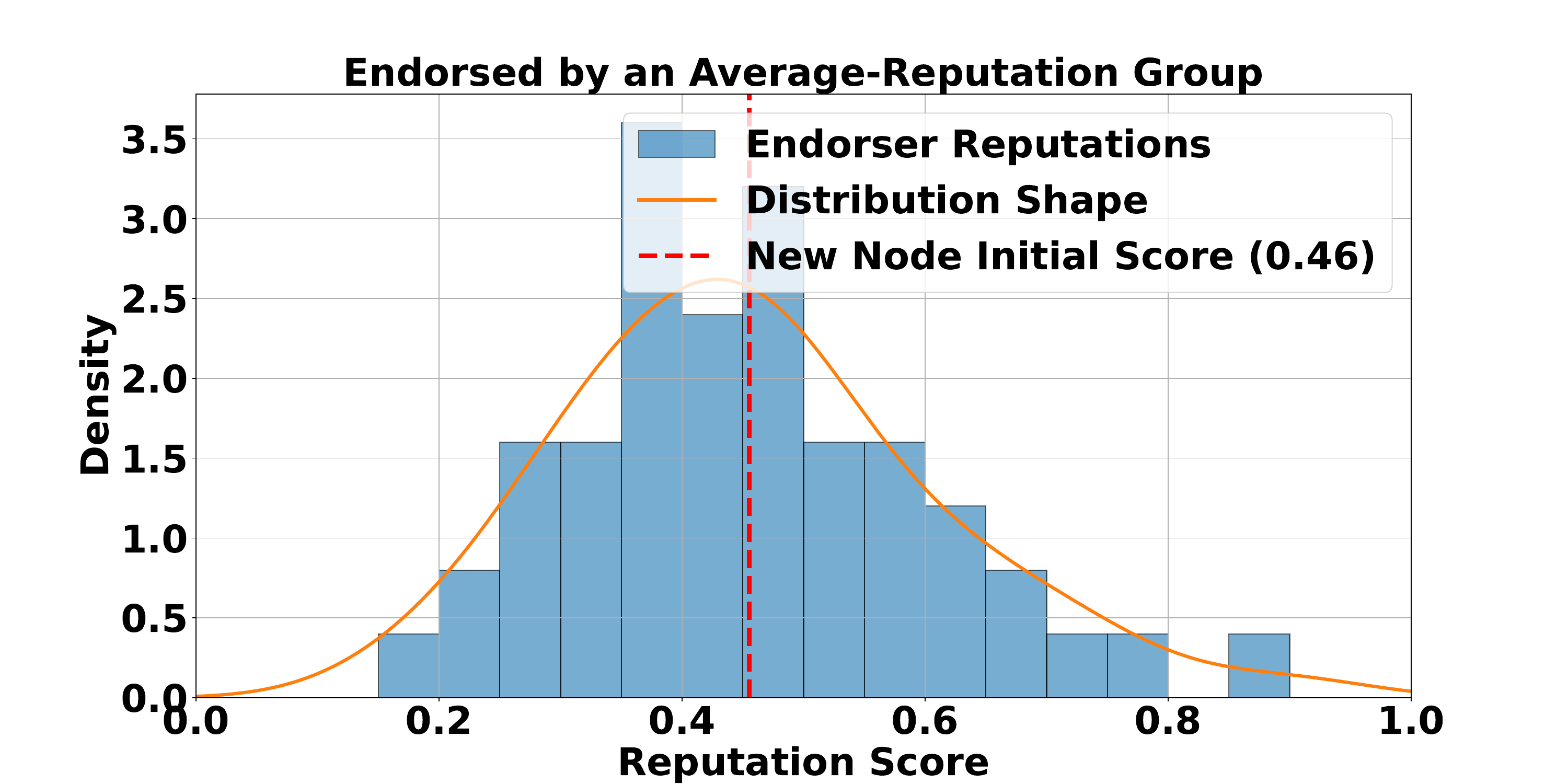}
    \end{subfigure}
    \hfill
    \begin{subfigure}{0.3\linewidth}
        \includegraphics[width=\linewidth]{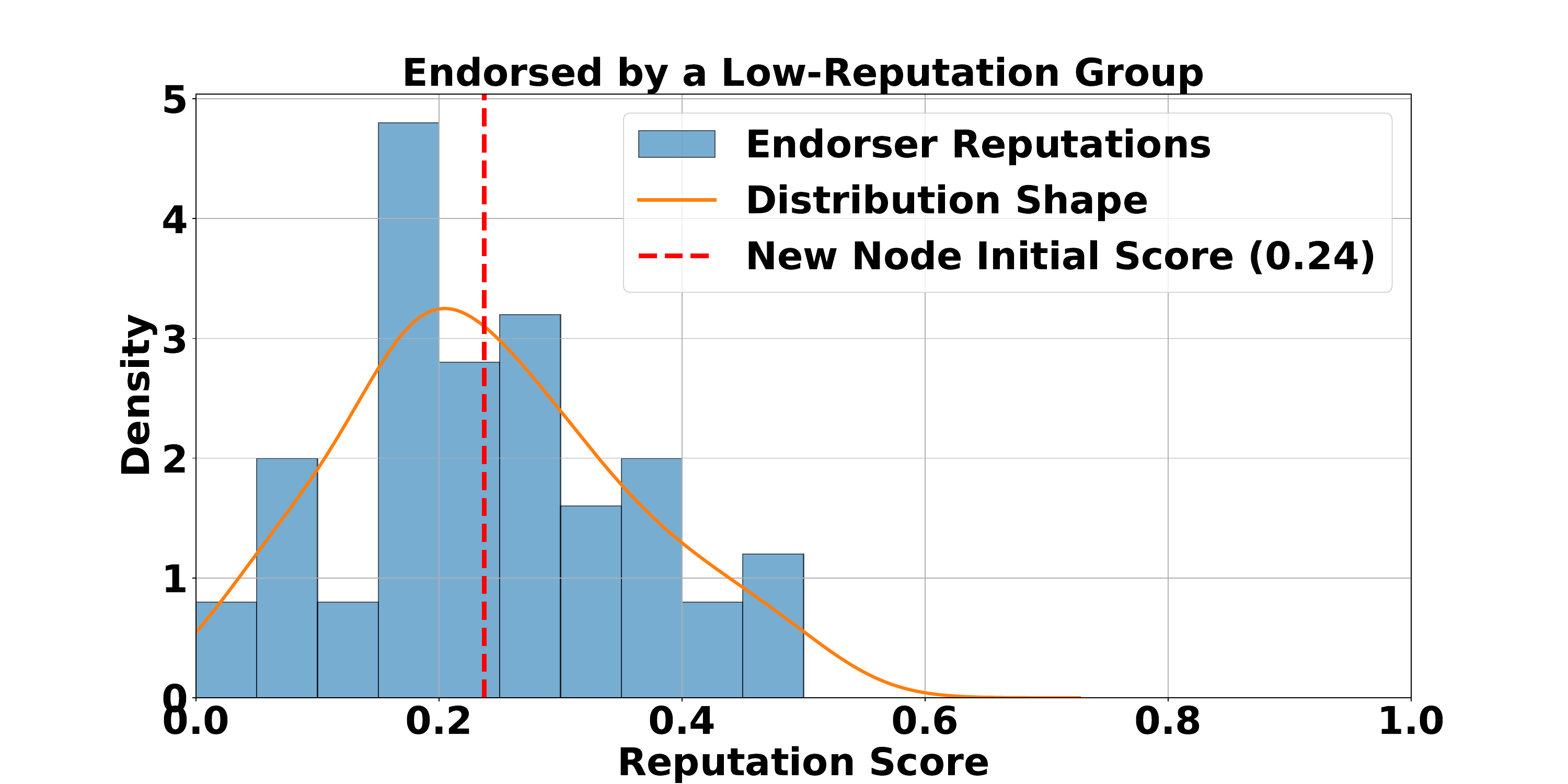}
    \end{subfigure}
    \caption{Demonstration of initial reputation assignment for a new node under different endorsement scenarios.}
    \label{fig:endorser_distribution}
\end{figure*}

\subsection{Dynamic Reputation Evolution}
\label{subsec:dynamic_score_evolution}
To illustrate RepuLink's two-layer mechanism in action, we design a controlled case study with seven synthetic nodes over $30$ time slots.
Each time slot draws a randomized number of pairwise interactions (mean $50$, std $14$), with per-node positive-rating probabilities jittered by Gaussian noise to emulate day-to-day fluctuation.
The endorsement matrix $E$ is held fixed across time slots so that the effects of Backward Endorsement Reward Propagation (BERP) and Penalty Propagation (BEPP) are attributable solely to the backward step; otherwise, endorsers would gradually withdraw their edges and the signal would self-extinguish.
There are seven nodes with various properties: well-behaved (A), misbehaving (C), isolated endorsers of A and C (B,~D), a no-endorsement control (E), and two time-varying actors (F~falling, G~rising).

Figure~\ref{fig:reputation_dynamics} reveals three features of RepuLink.
First, forward propagation clearly separates good and bad actors: A rises to the HIGH tier while C collapses into LOW.
Second, the backward step induces a \emph{symmetric divergence} between B, D, and the control E, all of which share the same intrinsic $60\%$ positive rate. B rises to $\approx 0.87$ because BERP flows reward through its endorsement of A, while D falls to $\approx 0.60$ because BEPP flows penalty through its endorsement of C.
Since B, D, and E are identical apart from their endorsement choices, the gap isolates the contribution of backward propagation.
Third, the time-varying actors F and G track gradual behavioral drift. F decays from a high initial reputation ($0.90$) into MEDIUM, while G climbs from LOW ($0.45$) to the upper edge of MEDIUM ($\approx 0.87$), demonstrating that the cumulative-counts formulation remains responsive to underlying behavioral change without over-reacting to per-time-slot noise.

\subsection{Effectiveness Evaluation}
\label{subsec:effectiveness evaluation}

\noindent
\textit{\textbf{Cold-Start Scenario}}
This experiment is designed to demonstrate RepuLink's ability to address the cold-start problem by leveraging its social endorsement layer to assign a meaningful initial reputation.
We use the \textit{Epinions} dataset to build an initial trust network. We then demonstrate the initial reputation assignment for three newly joined nodes mainly endorsed by a high-reputation group, an average-reputation group, and a low-reputation group, respectively.

\begin{table*}[t]
\centering
\caption{Performance Comparison: RepuLink vs. Others}
\label{tab:performance_comparison}
\resizebox{0.85\textwidth}{!}{%
\begin{tabular}{|c|cccc|cccc|cccc|cccc|}
\bottomrule
\multirow{3}{*}{\textbf{Algorithm}}  & \multicolumn{8}{c|}{\textbf{Bitcoin-OTC}}                        & \multicolumn{8}{c|}{\textbf{Bitcoin-Alpha}}                       \\ \cline{2-17}
& \multicolumn{4}{c|}{\textbf{Layer 2}}                        & \multicolumn{4}{c|}{\textbf{Layer 1\&2}} & \multicolumn{4}{c|}{\textbf{Layer 2}}                        & \multicolumn{4}{c|}{\textbf{Layer 1\&2}}                      \\
 &
  \multicolumn{1}{c}{\textbf{AUC}} &
  \multicolumn{1}{c}{\textbf{PK}} &
  \multicolumn{1}{c}{\textbf{KT}} &
  \multicolumn{1}{c|}{\textbf{SRC}} &
    \multicolumn{1}{c}{\textbf{AUC}} &
  \multicolumn{1}{c}{\textbf{PK}} &
  \multicolumn{1}{c}{\textbf{KT}} &
  \multicolumn{1}{c|}{\textbf{SRC}} &
  \multicolumn{1}{c}{\textbf{AUC}} &
  \multicolumn{1}{c}{\textbf{PK}} &
  \multicolumn{1}{c}{\textbf{KT}} &
  \multicolumn{1}{c|}{\textbf{SRC}} &
  \multicolumn{1}{c}{\textbf{AUC}} &
  \multicolumn{1}{c}{\textbf{PK}} &
  \multicolumn{1}{c}{\textbf{KT}} &
  \multicolumn{1}{c|}{\textbf{SRC}} \\ \toprule \hline
PageRank\cite{page1999pagerank}                            
& 0.74          & 0.68          & 0.34          & 0.41 
& 0.69 & 0.69 & 0.30 & 0.37 
& \underline{0.74}          & \underline{0.69}          & \underline{0.33}          & \underline{0.41}   
& \underline{0.77} & \underline{0.72} & \underline{0.38} & \underline{0.46}  \\
EigenTrust\cite{kamvar2003eigentrust}                          
& \underline{0.75}          & 0.71          & \underline{0.35}          & \underline{0.43}    
& 0.68 & 0.67 & 0.24 &   0.29    
& 0.67          & 0.62          & 0.27          & 0.33
& 0.76 & 0.71 & 0.37 & 0.45   \\
PowerTrust\cite{zhou2007powertrust}                         
& 0.67          & \underline{0.73}          & 0.24          & 0.28    
& 0.72 & 0.69  & 0.32 &   0.39   
& 0.61          & 0.56          & 0.15          & 0.18
& 0.69 & 0.62 & 0.27 & 0.33     \\
AbsoluteTrust\cite{awasthi2020absolutetrust}                       
& 0.46          & 0.46          & -0.06         & -0.07
& 0.65 & 0.66 & 0.22 &  0.27    
& 0.56          & 0.54          & 0.09          & 0.11 
& 0.76 & 0.71 & 0.37 & 0.45   \\
ShapleyTrust\cite{bandhana2024trust}                       
& 0.73          & 0.71          & 0.33          & 0.40
& \underline{0.73} & \underline{0.71} & \underline{0.33} &   \underline{0.41}   
& 0.68          & 0.64          & 0.25          & 0.31
& 0.69 & 0.65 & 0.26 & 0.32    \\
\textbf{RepuLink}                   
& \textbf{0.83} & \textbf{0.75} & \textbf{0.46} & \textbf{0.56} 
& \textbf{0.81} & \textbf{0.73} & \textbf{0.44} & \textbf{0.53}  
& \textbf{0.84} & \textbf{0.77} & \textbf{0.47} & \textbf{0.58} 
& \textbf{0.85} & \textbf{0.76} & \textbf{0.49} &  \textbf{0.60}  \\ \hline
Improvement 
\% & 8\% & 2\% & 11\% & 13\% 
& 8\% & 2\% & 11\% & 12\% 
& 10\% & 8\% & 14\% & 17\% 
& 8\% & 4\% & 11\% & 14\%\\
\toprule
\multicolumn{17}{l}{Note: In each column, the best-performing result is indicated in \textbf{bold}, while the improvement is computed relative to the \underline{best baseline}.}
\end{tabular}
}
\end{table*}

\noindent
\textit{- Performance Analysis.}
Each scenario involves a new node joining the network with a different endorsement profile—defined by the reputation distribution of the 50 existing nodes endorsing it. As shown in Fig.~\ref{fig:endorser_distribution}, the results demonstrate that RepuLink assigns an interpretable initial reputation that is a direct function of the endorser group's quality.
When the new node is endorsed mainly by high-reputation members, RepuLink aggregates this strong social trust and assigns a high initial reputation, allowing for immediate integration into the network.
When the endorser group's reputation follows a normal distribution, the resulting initial reputation is appropriately moderate, reflecting the average trustworthiness of its social backers.
When the endorser group consists mostly of low-reputation members, the calculated initial reputation is correctly low, signaling that the node's social connections are of questionable quality.
These outcomes validate the expected behavior of the endorsement mechanism as discussed in Sec.~\ref{subsec:node arrival and departure}.

\noindent
\textit{\textbf{Evaluation of RepuLink (Layer 2 only)}}
Tab.~\ref{tab:performance_comparison} presents the experimental results for RepuLink and all competitor models across the four evaluation metrics.
It is evident that the \textit{RepuLink} model consistently outperforms all other methods across the four representative metrics on both the Bitcoin-OTC/Alpha datasets. RepuLink's superior ability to identify high-reputation nodes and rank them ahead of low-reputation nodes is supported by its AUC results (0.83 and 0.84 on the two datasets, respectively).

The model's capacity to maintain high ranking quality (i.e., the percentage of correctly identified positive nodes within the top-\(K\) results) is substantiated by its leading Precision@K values of 0.75 (Bitcoin-OTC) and 0.77 (Bitcoin-Alpha). Although RepuLink's Precision@K performance on Bitcoin-OTC may not uniformly exceed that of other models for all values of \(K\), it achieves the highest performance as \(K\) increases. This indicates that our model effectively balances the ranking of very high-reputation nodes with strong overall performance across a broader range of network nodes.
Additionally, RepuLink's Kendall's Tau (KT) results of 0.46 and 0.47 on the two datasets demonstrate its enhanced performance in measuring the ordinal consistency between the predicted ranking and the ground-truth ranking. The SRC results (0.56 for Bitcoin-OTC and 0.58 for Bitcoin-Alpha) further indicate RepuLink's superior ability to preserve the monotonic relationship between its predicted reputation scores (or rankings) and the ground truth.

\noindent
\textit{\textbf{Evaluation of RepuLink (Full Model)}}
To evaluate the performance of RepuLink, we compared it with five other models.

\noindent
\textit{- Ground Truth.}
The ground truth generated for this experiment aims to identify nodes with high and low reputation by combining interaction history with endorsements. Initially, an interaction-based reputation is calculated for each node using the method for Layer 2 only experiment. Concurrently, the number of incoming endorsements for each node is counted and normalized. Then a combined score is computed for each node by taking a weighted average of its interaction rating and its normalized endorsements. 
Similarly, the top 20\% of users are labeled ``high-reputation'' and the bottom 20\% ``low-reputation''.
Since the competitor models do not incorporate an endorsement layer, we adapt their outputs to ensure a fair comparison. For each competitor model, its specific methodology is used to compute an interaction-based trustworthiness score. This score is then combined with the same normalized endorsement counts used for our ground truth generation, employing an identical weighted averaging approach. 

\begin{figure}[t]
    \centering
    \begin{subfigure}{0.43\linewidth}
        \includegraphics[width=\linewidth]{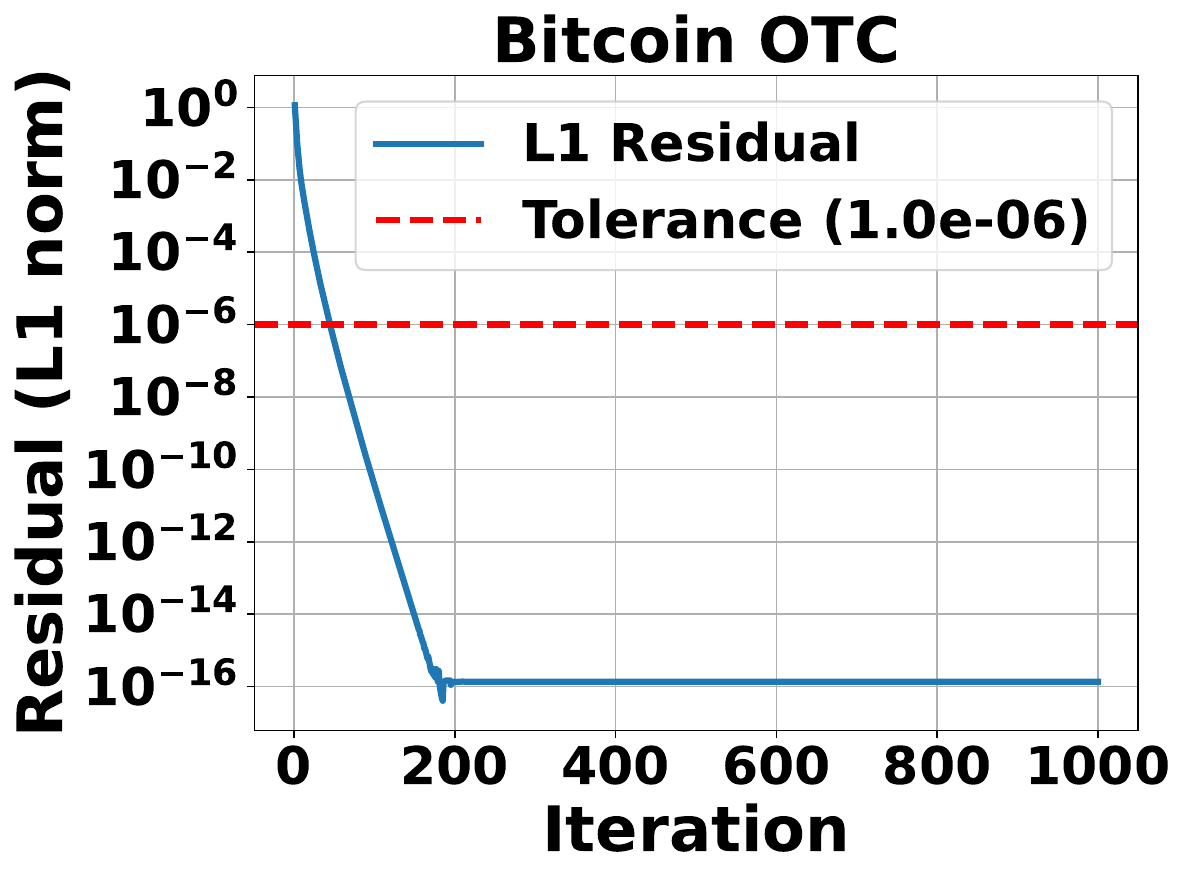}
    \end{subfigure}
    \hfill
    \begin{subfigure}{0.43\linewidth}
        \includegraphics[width=\linewidth]{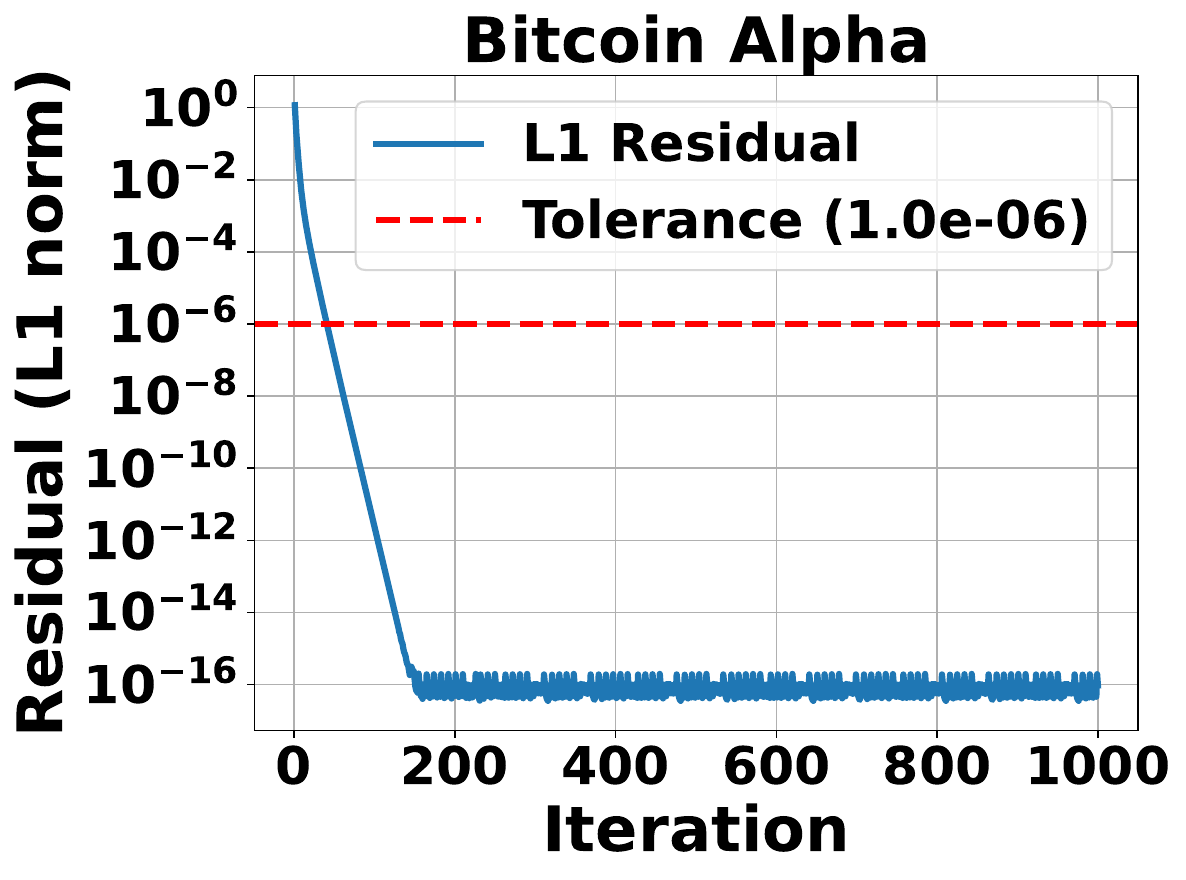}
    \end{subfigure}
    \caption{Convergence Curve of RepuLink.}
    \label{fig:convergence}
\end{figure}

\begin{table}[t]
\centering
\caption{Convergence Analysis: RepuLink vs. Others}
\label{tab:convergence_comparison}
\resizebox{0.85\columnwidth}{!}{%
\begin{tabular}{|c|ccc|ccc|}
\bottomrule
\multirow{2}{*}{\textbf{Algorithm}}  & \multicolumn{3}{c|}{\textbf{Bitcoin-OTC}}  & 
\multicolumn{3}{c|}{\textbf{Bitcoin-Alpha}}                   \\ \cline{2-7} 
 &
  \multicolumn{1}{c}{\textbf{\textcircled{1}}} &
  \multicolumn{1}{c}{\textbf{\textcircled{2}}} &
  \multicolumn{1}{c|}{\textbf{\textcircled{3}}} &
  \multicolumn{1}{c}{\textbf{\textcircled{1}}} &
  \multicolumn{1}{c}{\textbf{\textcircled{2}}}  &
  \multicolumn{1}{c|}{\textbf{\textcircled{3}}}\\ \toprule \hline
PageRank~\cite{page1999pagerank}       & 50         & 0.05          & $9e-7$     & 50         & 0.05          & $9e-7$    \\
EigenTrust~\cite{kamvar2003eigentrust}     & 47         & 0.29          & $9e-7$     & 47         & 0.29          & $9e-7$    \\
PowerTrust~\cite{zhou2007powertrust}     & 53         & 0.01         & $8e-7$     & 55         & 0.01          & $8e-7$    \\
AbsoluteTrust~\cite{awasthi2020absolutetrust}  & 39         & 2.73          & $9e-7$     & 39         & 2.72          & $9e-7$    \\
ShapleyTrust~\cite{bandhana2024trust}  & N/A         & N/A          & N/A     & N/A         & N/A          & N/A    \\
\textbf{RepuLink}  & 45     & 0.39          &  $9e-7$    & 45         & 0.39          & $9e-7$    \\
\toprule
\multicolumn{7}{l}{
\textcircled{1}: Number of iterations to the specified tolerance. } \\
\multicolumn{7}{l}{
\textcircled{2}: Running time in seconds. \textcircled{3}: Residual at the last iteration.}
\end{tabular}}
\end{table}

\noindent
\textit{- Performance Analysis.}
Tab.~\ref{tab:performance_comparison} presents the experimental results for RepuLink and competitor models.
Echoing its strong performance characteristics (such as the capacity to maintain high ranking quality) observed in the Layer 2 only scenario analysis, RepuLink in its comprehensive configuration consistently outperforms competitor models across the four metrics on both the Bitcoin-OTC/Alpha datasets.
Specifically, \textit{RepuLink} achieves AUC scores of 0.81 and 0.85 on the two datasets, respectively, underscoring its robust ability to distinguish between high- and low-reputation nodes. 
Regarding Precision@K, particularly on the Bitcoin-Alpha dataset when utilizing both layers, RepuLink's performance may not uniformly exceed that of other models for all values of \(K\), it demonstrates better performance as \(K\) increases.
Additionally, RepuLink's KT results of 0.44 (Bitcoin-OTC) and 0.49 (Bitcoin-Alpha) highlight its enhanced performance in measuring the ordinal consistency between its predicted rankings and the ground-truth ranking. The SRC results (0.53 for Bitcoin-OTC and 0.60 for Bitcoin-Alpha) further indicate RepuLink's better ability to preserve the monotonic relationship between its predicted reputation and the ground truth.

Tab.~\ref{tab:convergence_comparison} summarizes the convergence performance of \textit{RepuLink} and the competitor models, with Fig.~\ref{fig:convergence} further visualizing \textit{RepuLink}'s convergence trajectory.
While not necessarily the fastest overall, \textit{RepuLink} demonstrates efficient convergence, reaching the predefined tolerance threshold of $10^{-6}$ in approximately 45 iterations (0.39\,s runtime) on both datasets.
The convergence curves in Fig.~\ref{fig:convergence} affirm this robust behavior: the residual error drops rapidly within the initial tens of iterations, quickly satisfying the $10^{-6}$ benchmark and subsequently stabilizing at even lower levels.
\textit{ShapleyTrust}~\cite{bandhana2024trust} is listed as N/A in Tab.~\ref{tab:convergence_comparison} due to its non-iterative calculation method and its significant computational time, which required execution on an HPC cluster even when employing Shapley value approximation techniques~\cite{castro2009polynomial}.

\begin{figure*}[t]
    \centering
    \begin{subfigure}{0.2\linewidth}
        \includegraphics[width=\linewidth]{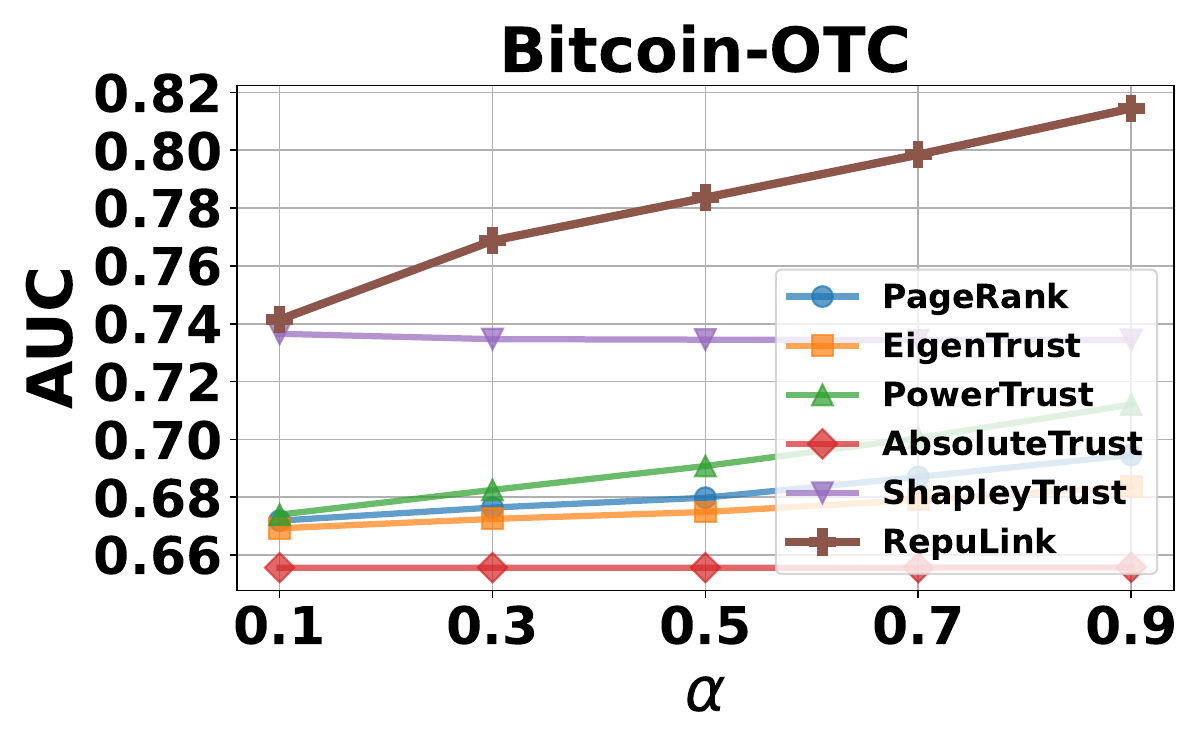}
    \end{subfigure}\hfill
    \begin{subfigure}{0.2\linewidth}
        \includegraphics[width=\linewidth]{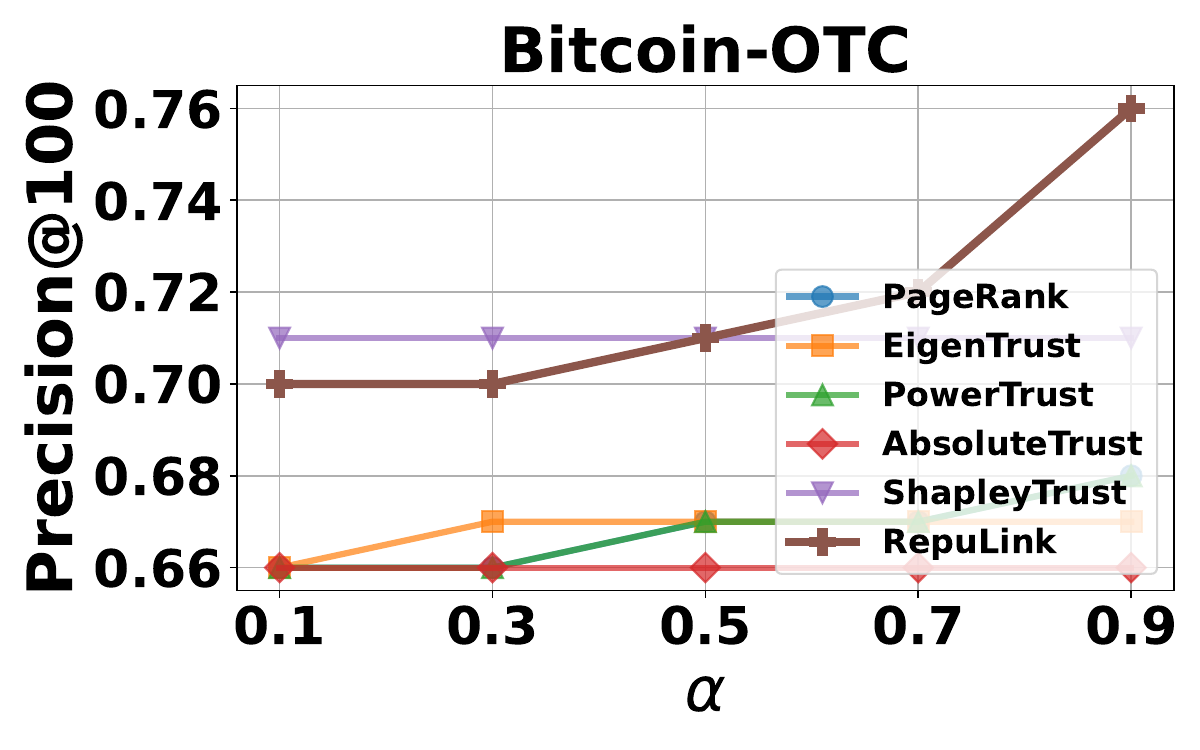}
    \end{subfigure}\hfill
    \begin{subfigure}{0.2\linewidth}
        \includegraphics[width=\linewidth]{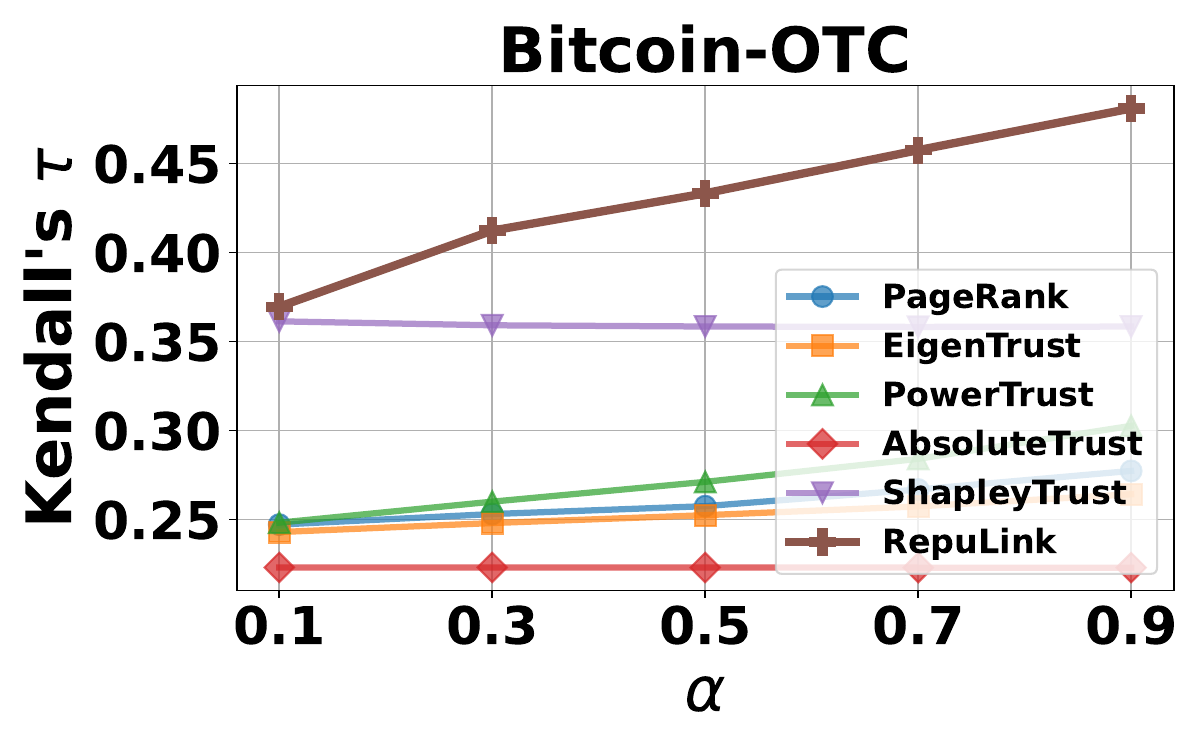}
    \end{subfigure}\hfill
    \begin{subfigure}{0.2\linewidth}
        \includegraphics[width=\linewidth]{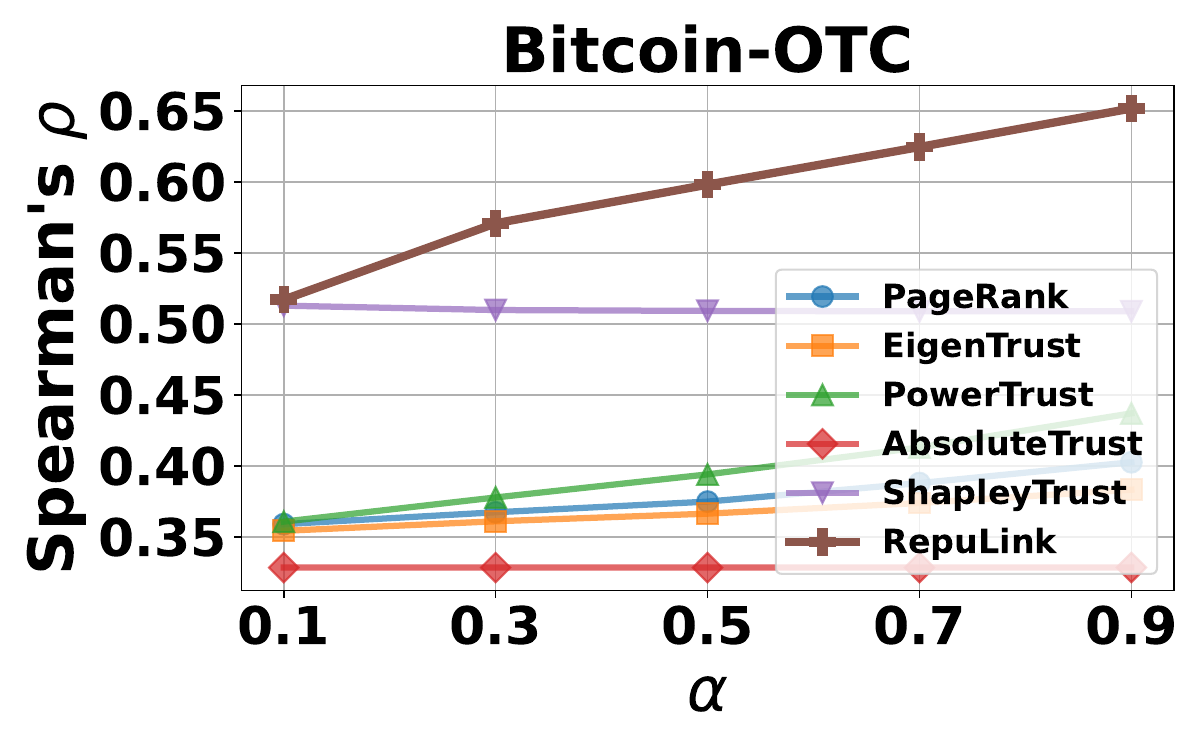}
    \end{subfigure}
    \\[2pt]
    \begin{subfigure}{0.2\linewidth}
        \includegraphics[width=\linewidth]{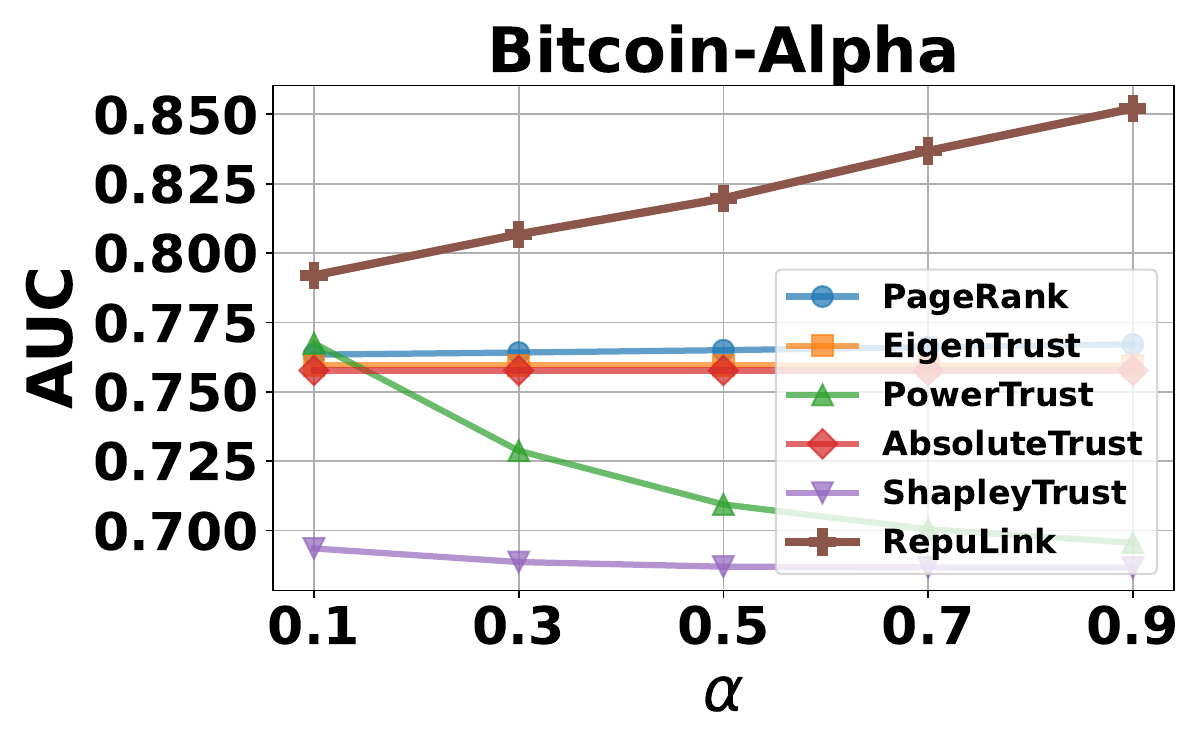}
    \end{subfigure}\hfill
    \begin{subfigure}{0.2\linewidth}
        \includegraphics[width=\linewidth]{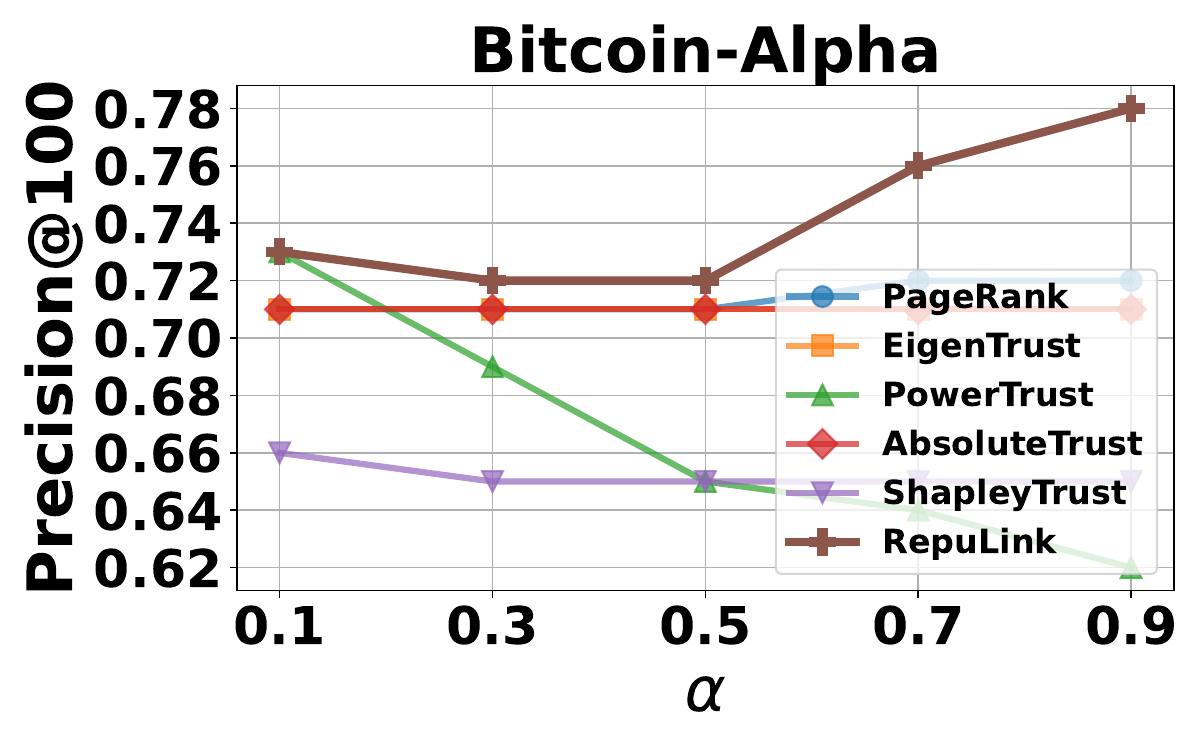}
    \end{subfigure}\hfill
    \begin{subfigure}{0.2\linewidth}
        \includegraphics[width=\linewidth]{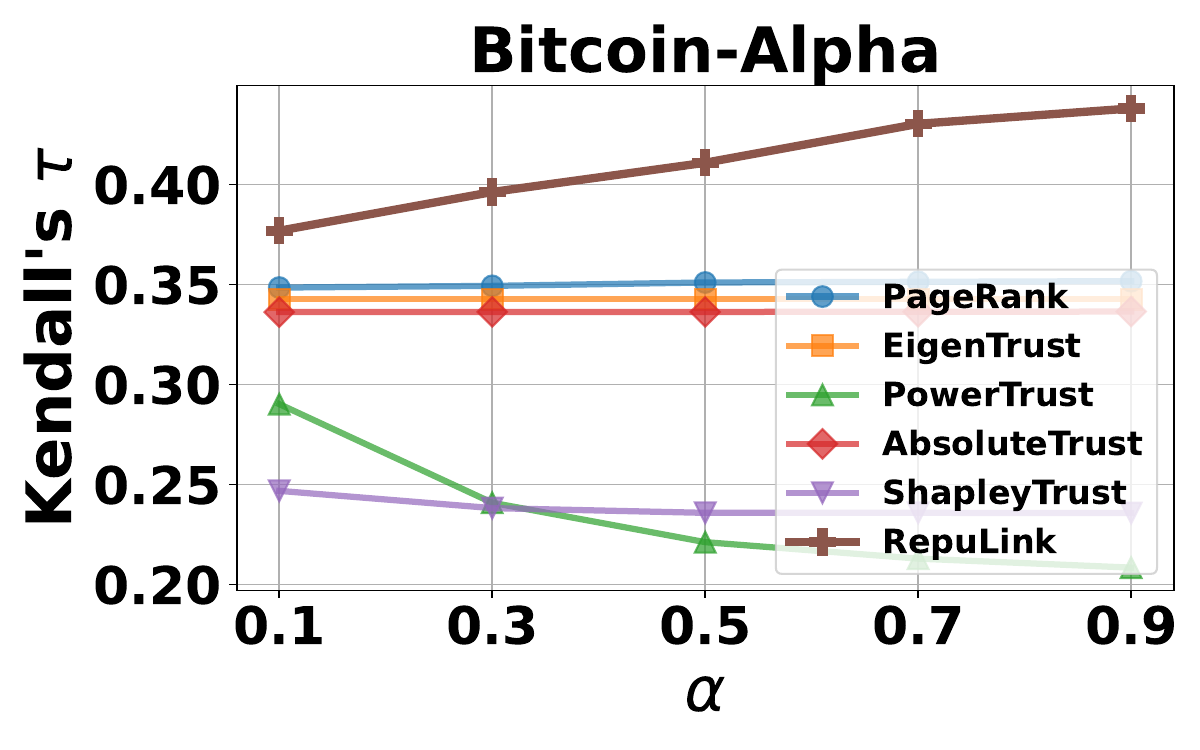}
    \end{subfigure}\hfill
    \begin{subfigure}{0.2\linewidth}
        \includegraphics[width=\linewidth]{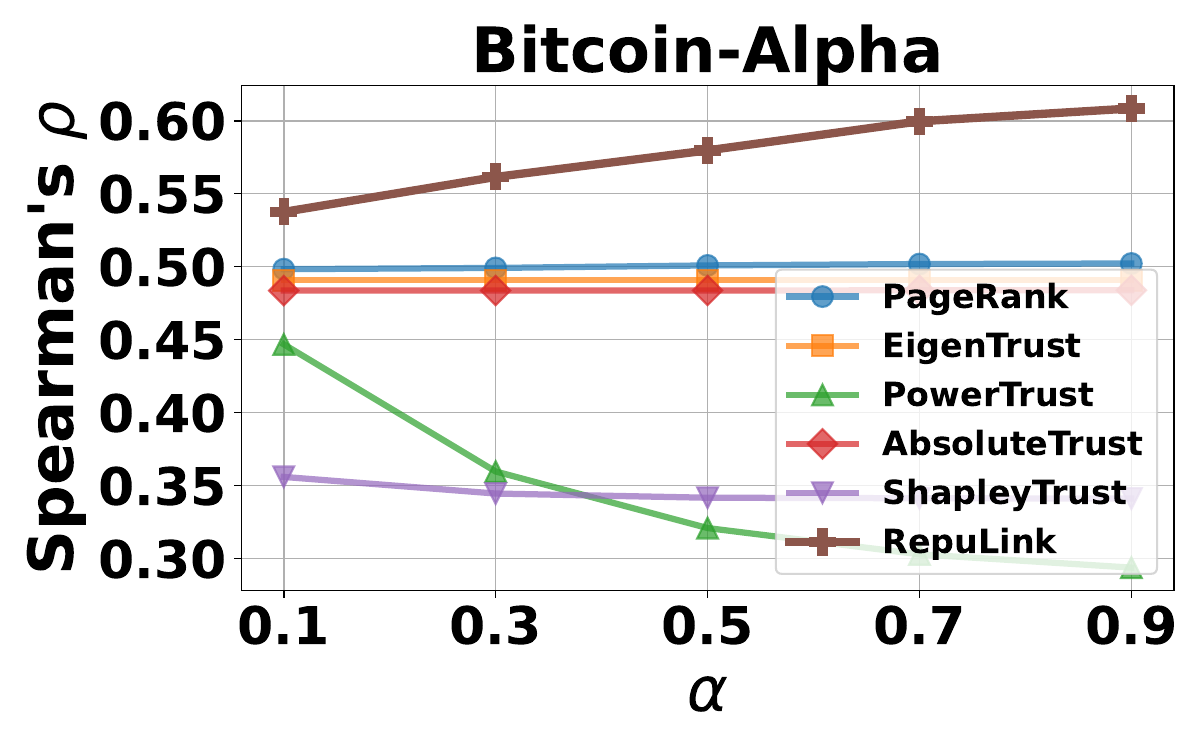}
    \end{subfigure}
    \caption{Parameter sensitivity on Bitcoin-OTC (top) and Bitcoin-Alpha (bottom).}
    \label{fig:sensitivity}
\end{figure*}

\subsection{Parameter Sensitivity}
We evaluate how the interaction--endorsement mixing parameter $\alpha$ affects the full two-layer setting on Bitcoin-OTC and Bitcoin-Alpha. The ground-truth high/low reputation labels are held fixed throughout the sweep, so the curves measure sensitivity of the scoring methods rather than sensitivity of a moving evaluation target. For RepuLink, $\alpha$ directly changes the propagation operator, $W=\alpha T^{\top}+(1-\alpha)E^{\top}$. For each baseline, $\alpha$ is applied in the same two-layer spirit by combining its interaction-layer score with the normalized endorsement in-degree, i.e., $\alpha s_{\mathrm{base}}+(1-\alpha)s_{\mathrm{endorse}}$. Figure~\ref{fig:sensitivity} reports all four evaluation metrics for the six methods used in the main two-layer comparison.

Across both datasets, RepuLink benefits consistently from assigning more weight to the interaction layer while still retaining endorsement information. On Bitcoin-OTC, its AUC increases from $0.741$ at $\alpha=0.1$ to $0.814$ at $\alpha=0.9$, while Precision@100 rises from $0.70$ to $0.76$, Kendall's $\tau$ from $0.369$ to $0.481$, and Spearman's $\rho$ from $0.517$ to $0.652$. The same trend appears on Bitcoin-Alpha: AUC improves from $0.792$ to $0.852$, Precision@100 from $0.73$ to $0.78$, Kendall's $\tau$ from $0.377$ to $0.438$, and Spearman's $\rho$ from $0.537$ to $0.608$. The gains are monotonic or nearly monotonic across all four metrics, indicating that RepuLink is not relying on a narrow, brittle choice of $\alpha$.

The baseline curves are more method- and dataset-dependent. PageRank and EigenTrust change only mildly, AbsoluteTrust is almost flat, and ShapleyTrust slightly degrades as $\alpha$ increases. PowerTrust improves on Bitcoin-OTC but drops sharply on Bitcoin-Alpha, with AUC decreasing from $0.767$ to $0.696$. These mixed trends suggest that simple post-hoc interpolation between an interaction-only trust score and endorsement degree does not provide a stable parameter response across datasets. In contrast, RepuLink achieves the best peak value on every metric for both datasets: compared with the strongest baseline peak, its AUC is higher by about $7.8$ points on Bitcoin-OTC and $8.5$ points on Bitcoin-Alpha, and its rank correlations show similarly clear margins. Precision@100 has small low-$\alpha$ ties or near-ties, but RepuLink becomes strongest in the higher-$\alpha$ regime where the overall ranking metrics are also maximized. Overall, the sensitivity study supports the claim that RepuLink's advantage comes from jointly propagating information across the interaction and endorsement layers, rather than from selecting a favorable scalar mixing weight.

\subsection{Ablation Study}
\label{sec:ablation_study}
To evaluate the individual and combined contributions of the Backward Endorsement Penalty Propagation (BEPP) and Backward Endorsement Reward Propagation (BERP) mechanisms, we conducted an ablation study on the two datasets. This involved comparing the node rankings and distributions generated by four distinct configurations: \textbf{(i)} \textit{Forward Only (Baseline):} Standard RepuLink using only forward propagation based on Eq.~\eqref{eq:reputation}. \textbf{(ii)} \textit{With BEPP:} Baseline model augmented solely with the BEPP mechanism ($\tilde{R} = \hat{R} - \pi$). \textbf{(iii)} \textit{With BERP:} Baseline model augmented solely with the BERP mechanism ($\tilde{R} = \hat{R} + \rho$). \textbf{(iv)} \textit{Full Model:} Baseline model augmented with both BEPP and BERP mechanisms ($\tilde{R} = \hat{R} - \pi + \rho$).

\begin{figure*}[t]
    \centering
    \begin{subfigure}{0.23\linewidth}
        \includegraphics[width=\linewidth]{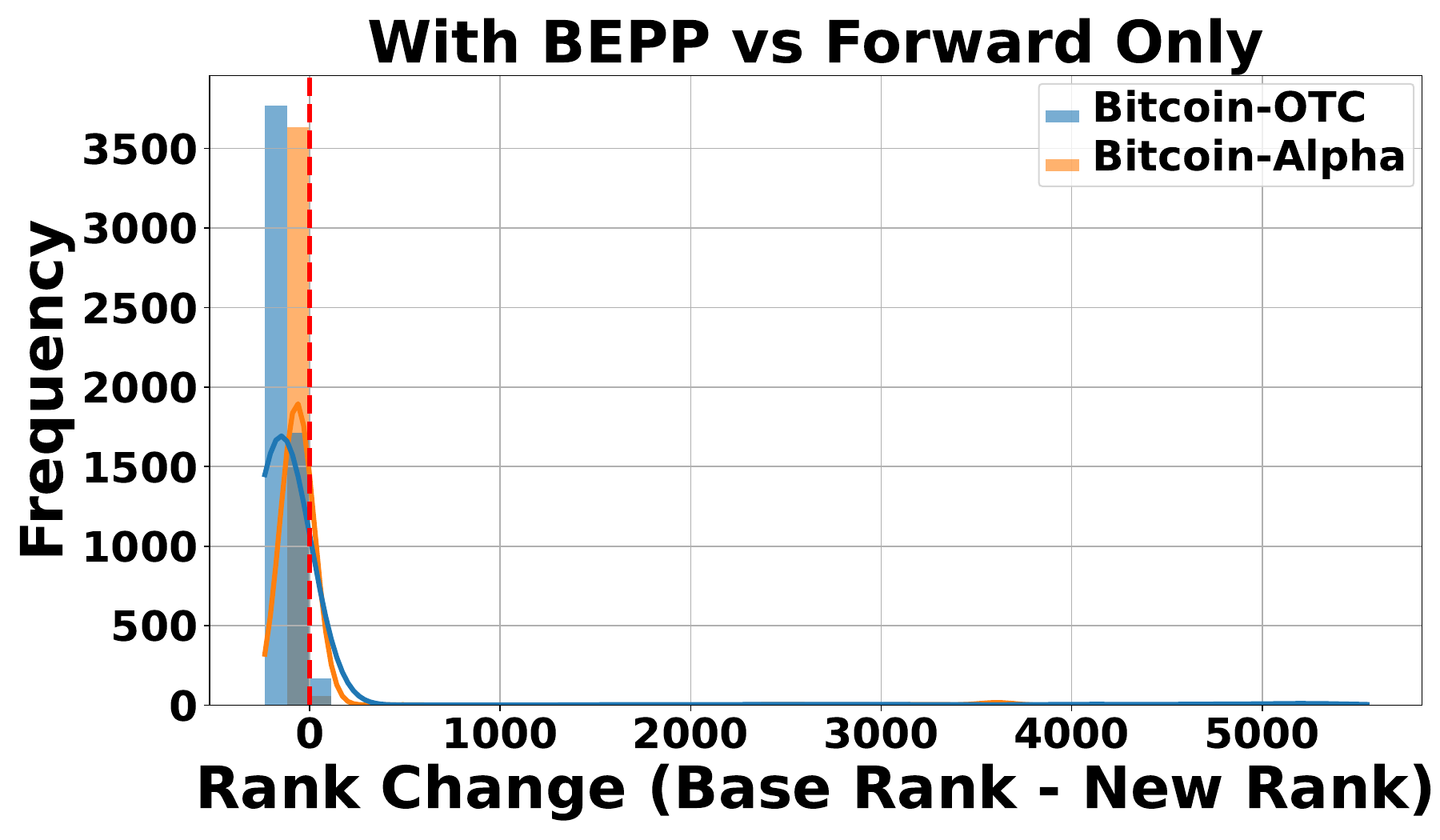}
    \end{subfigure}
    \hfill
    \begin{subfigure}{0.23\linewidth}
        \includegraphics[width=\linewidth]{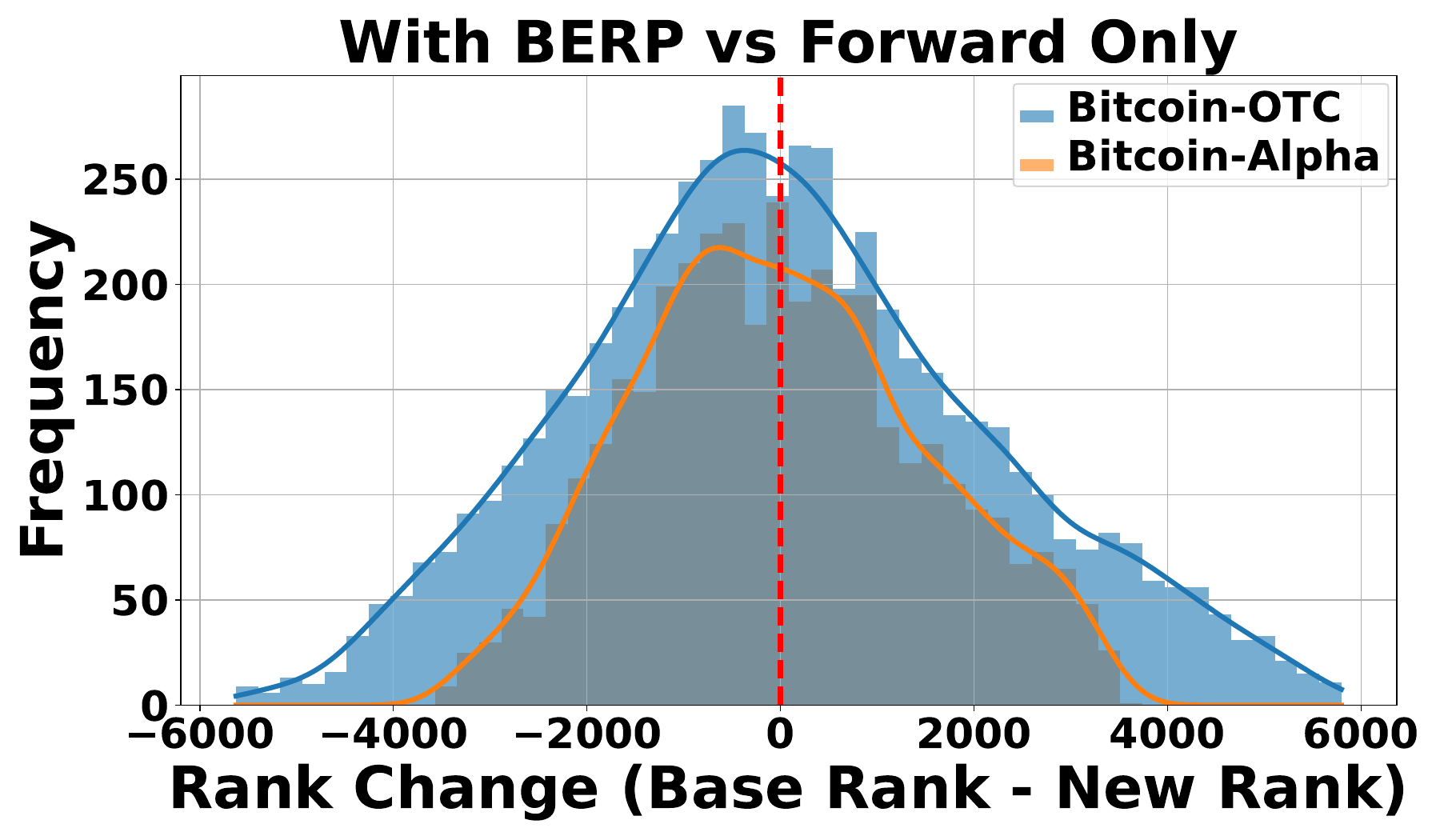}
    \end{subfigure}
    \hfill
    \begin{subfigure}{0.23\linewidth}
        \includegraphics[width=\linewidth]{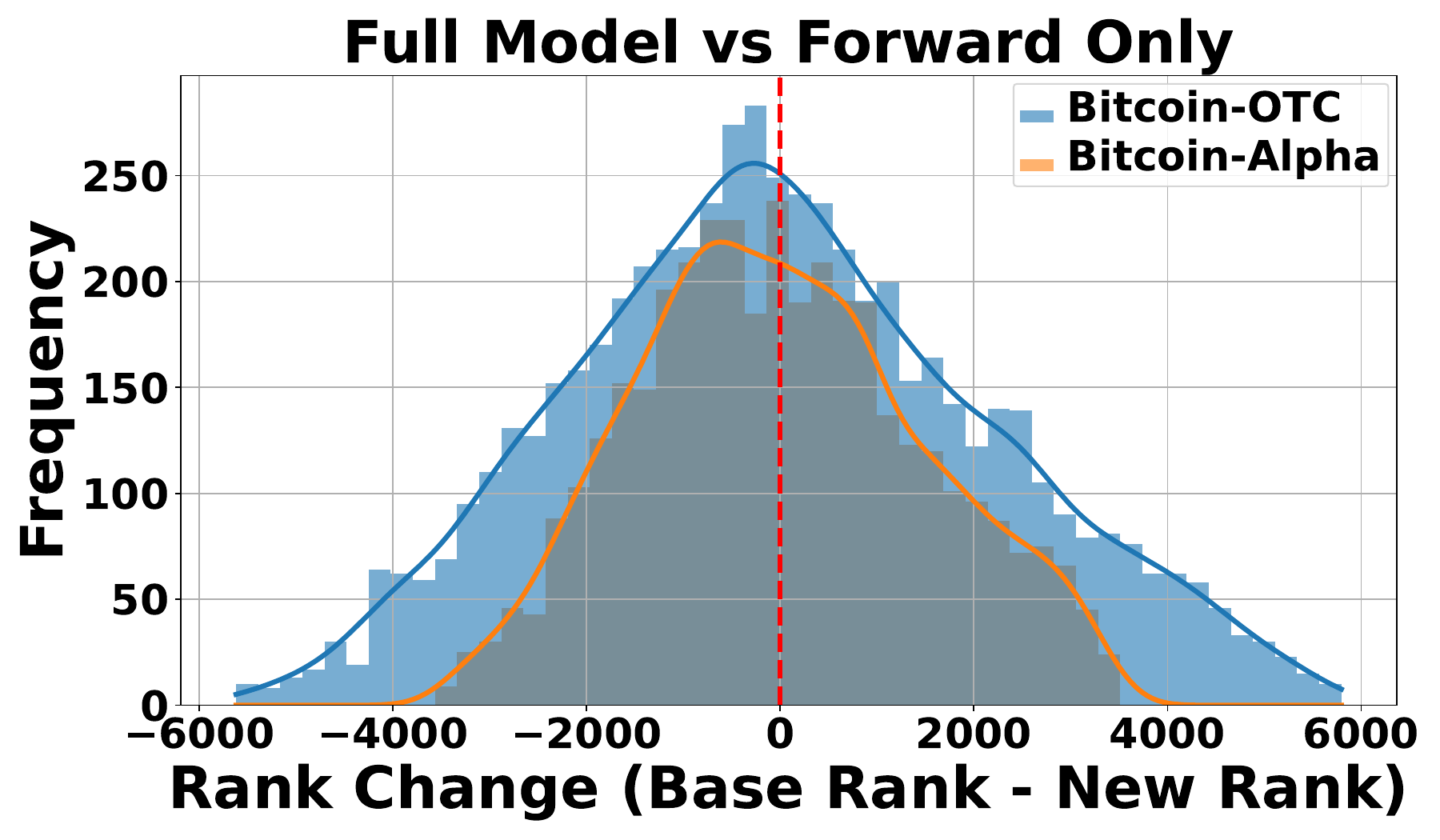}
    \end{subfigure}
    \caption{Rank Change Distribution}
    \label{fig:change_distribution}
\end{figure*}

\begin{figure*}[t]
    \centering
    \begin{subfigure}{0.18\linewidth}
        \includegraphics[width=\linewidth]{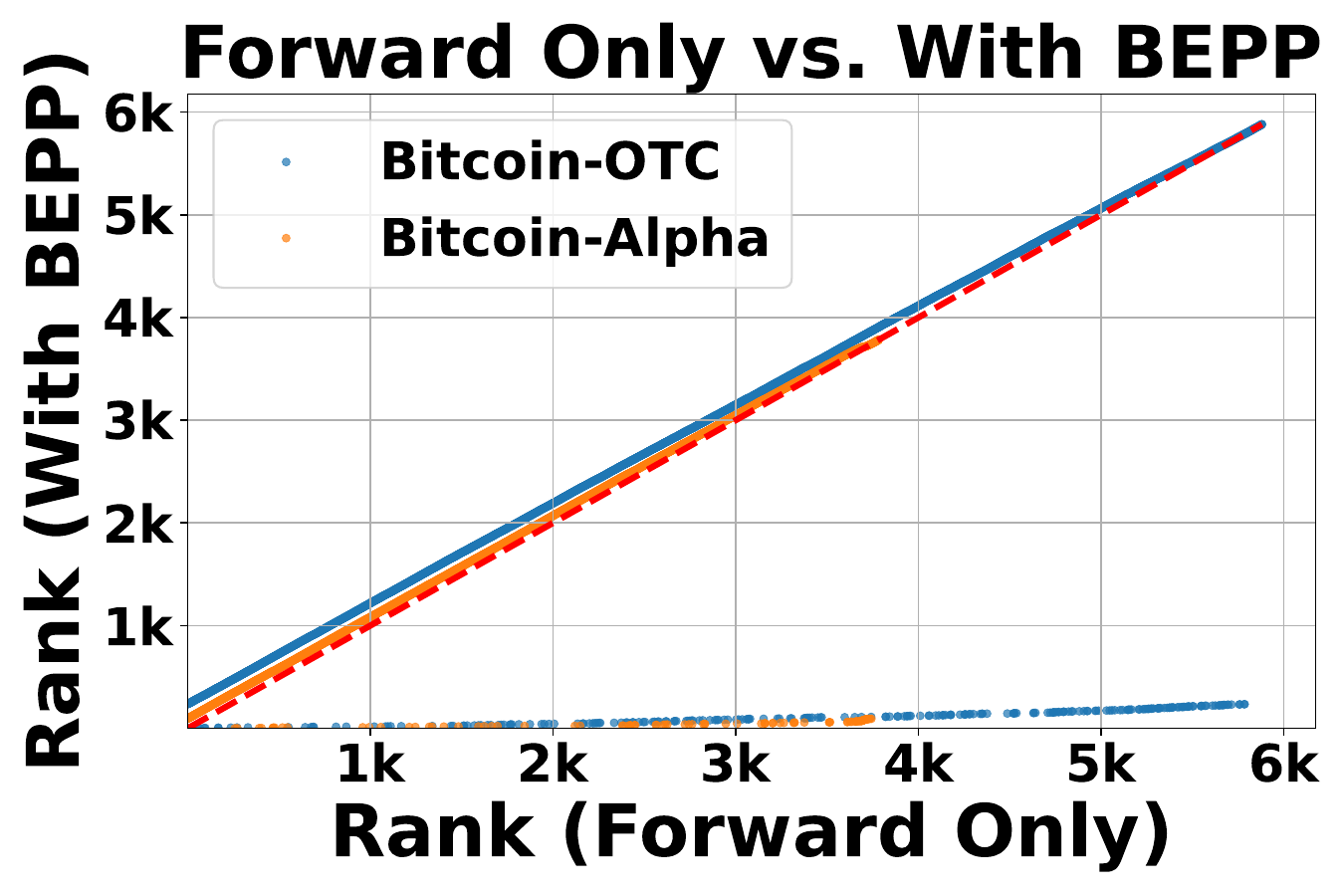}
    \end{subfigure}
    \hfill
    \begin{subfigure}{0.18\linewidth}
        \includegraphics[width=\linewidth]{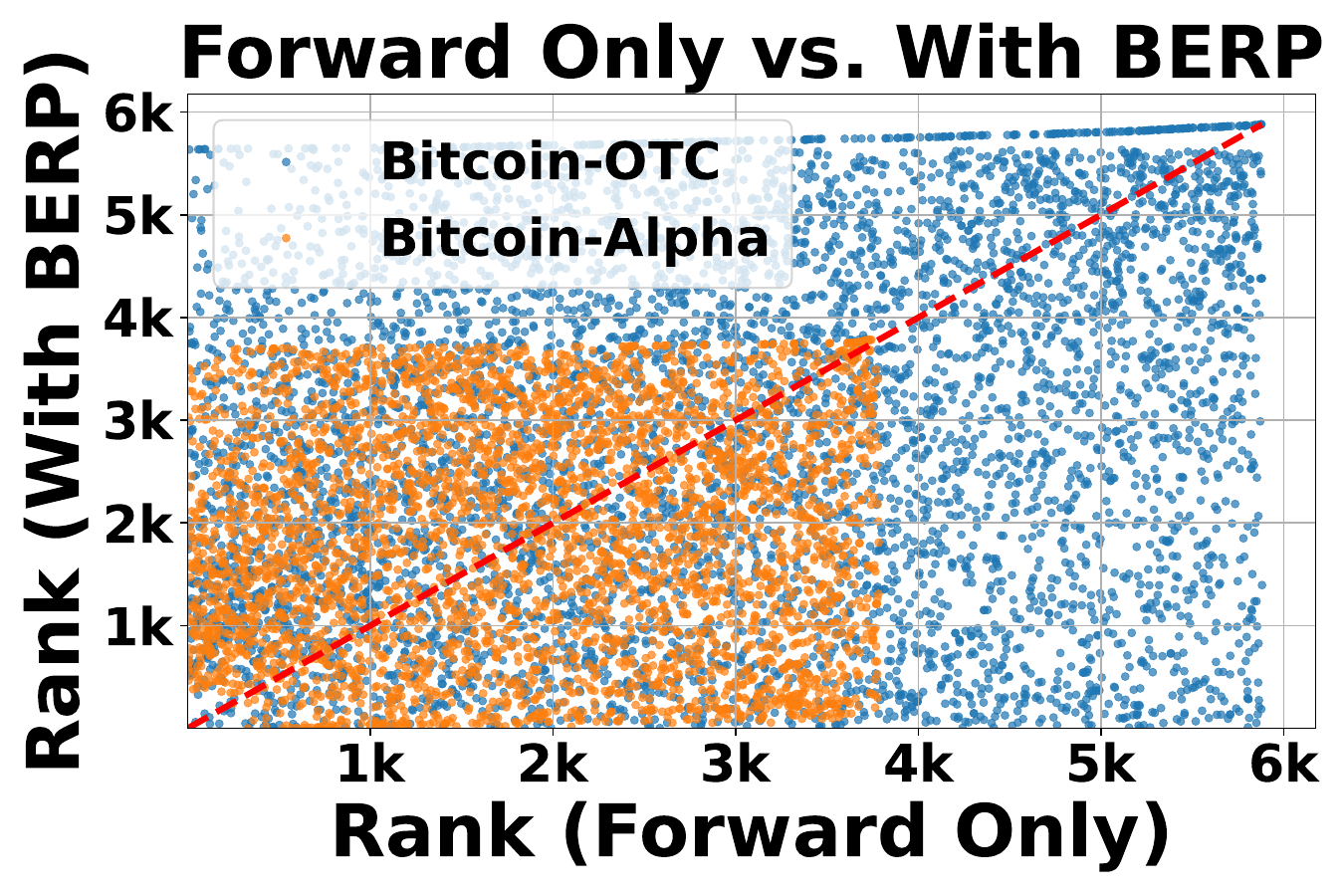}
    \end{subfigure}
    \hfill
    \begin{subfigure}{0.18\linewidth}
        \includegraphics[width=\linewidth]{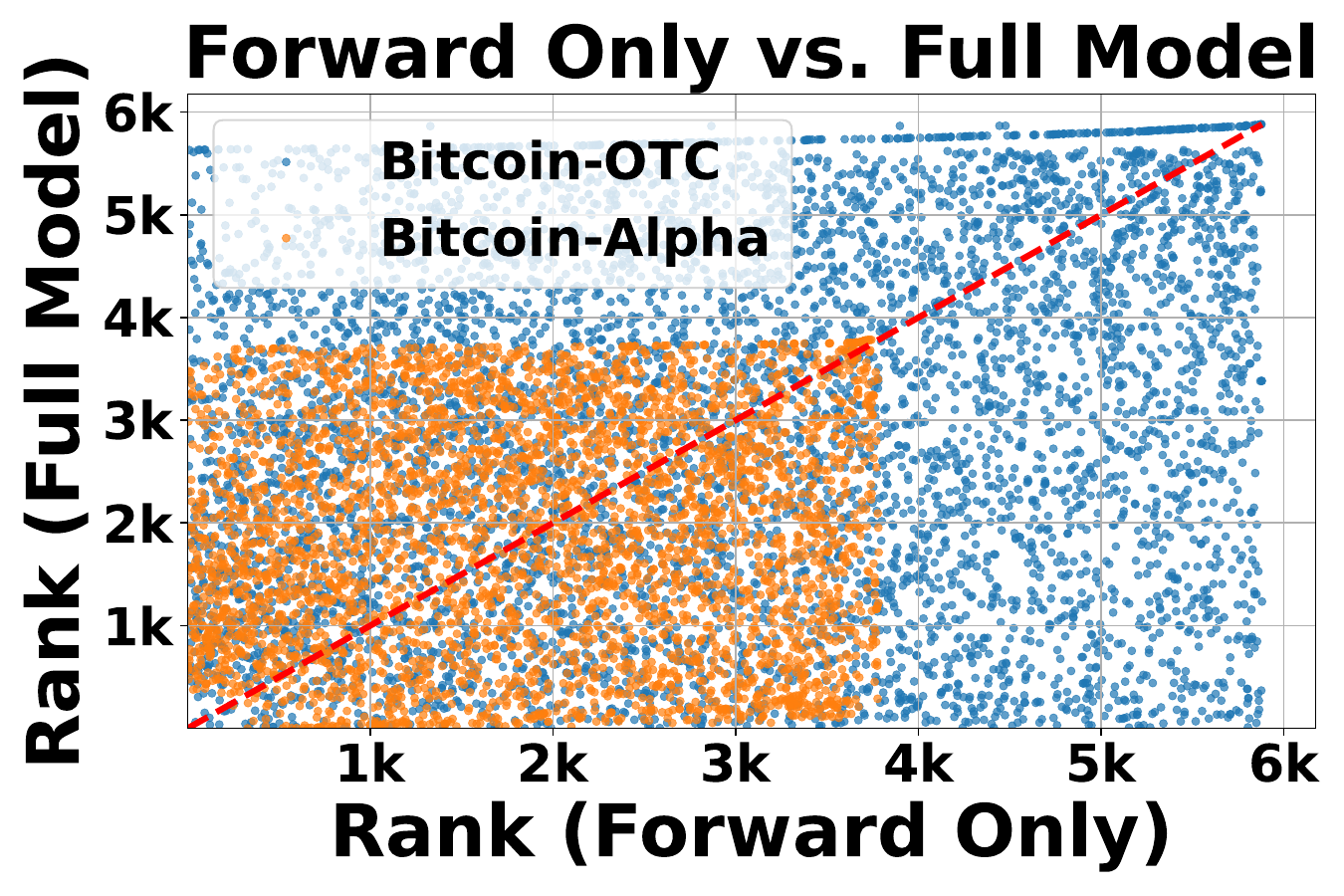}
    \end{subfigure}
    \hfill
    \begin{subfigure}{0.18\linewidth}
        \includegraphics[width=\linewidth]{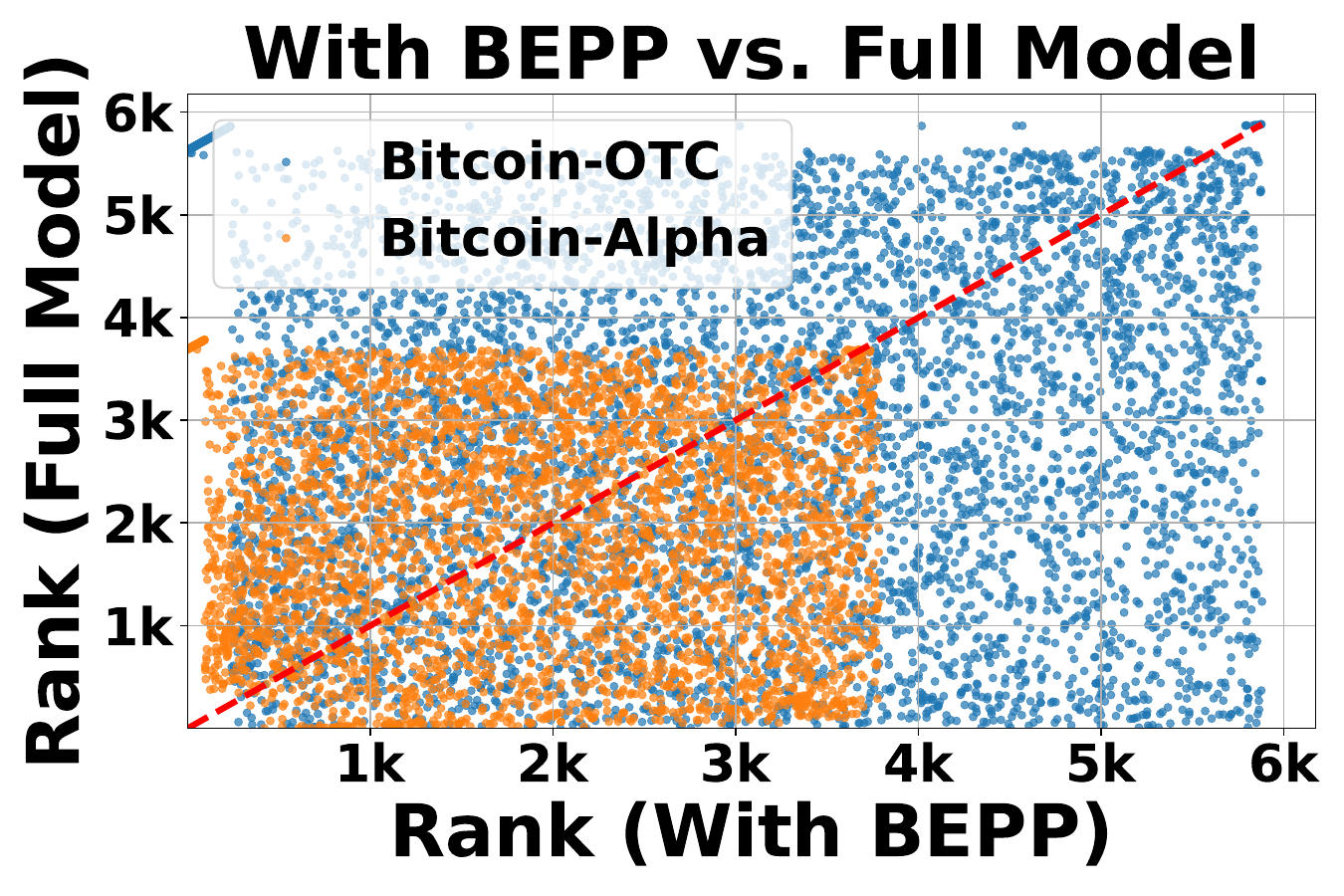}
    \end{subfigure}
    \hfill
    \begin{subfigure}{0.18\linewidth}
        \includegraphics[width=\linewidth]{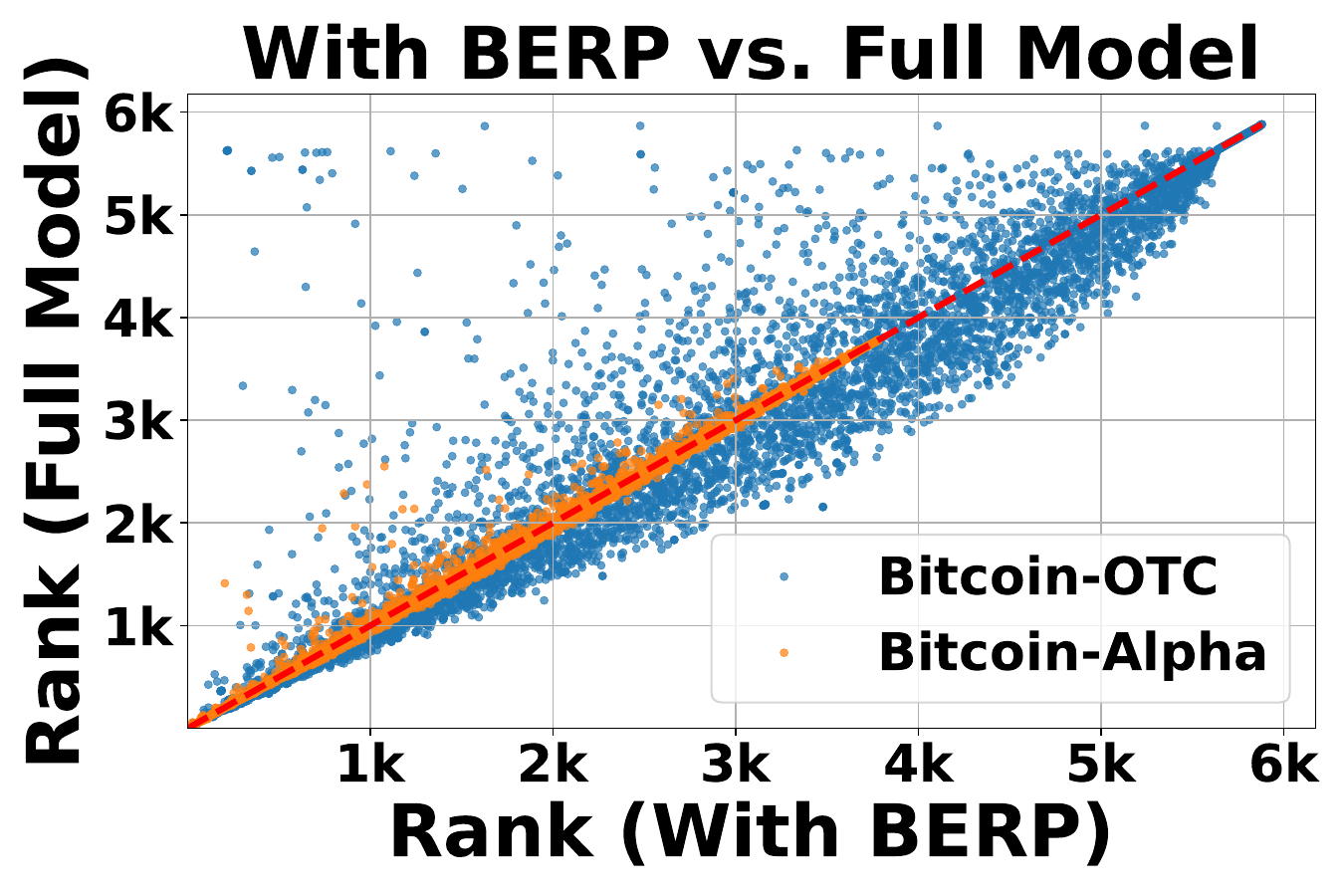}
    \end{subfigure}
    \caption{Rank Comparison}
    \label{fig:scatter}
\end{figure*}

\noindent
\textbf{Impact of Penalty Propagation (BEPP):} 
Comparing the \textit{With BEPP} model to \textit{Forward Only} reveals the specific effect of penalizing endorsers of poorly-rated nodes. The rank change histogram ($1^{st}$ figure in Fig.~\ref{fig:change_distribution}) exhibits a highly skewed distribution, with a sharp peak near zero and a long tail towards positive rank changes. This is supported by the rank comparison scatter plot ($1^{st}$ figure in Fig. \ref{fig:scatter}), where the vast majority of nodes cluster tightly along the $y=x$ line, signifying minimal rank change. However, a distinct subset of nodes shows a dramatic improvement in rank (points far below the $y=x$ line of the $1^{st}$ figure in Fig.~\ref{fig:scatter}). This indicates that \textit{BEPP} primarily acts by boosting the rank of a small number of nodes, those unaffected by or benefiting from the penalization of endorsers associated with negative feedback paths, while leaving the majority of node rankings largely unchanged or slightly decreased. The average rank change remains zero, highlighting this as a reputation redistribution rather than a global shift.

\noindent
\textbf{Impact of Reward Propagation (BERP):} 
Isolating the reward mechanism by comparing the \textit{With BERP} model to the \textit{Forward Only} shows a markedly different pattern. The rank change histogram ($2^{nd}$ of Fig.~\ref{fig:change_distribution}) displays a near-symmetrical, bell-shaped distribution centered around zero. Similarly, the rank comparison scatter plot ($2^{nd}$ of Fig.~\ref{fig:scatter}) shows points widely dispersed on both sides of the $y=x$ line, resembling the pattern observed in the \textit{Full Model} comparison ($3^{rd}$ of Fig.~\ref{fig:scatter}). This suggests that BERP induces significant, bidirectional rank changes across a broad range of nodes. Rewarding endorsers of well-rated nodes leads to rank improvements for some nodes, while the relative ranks of others decrease as part of the overall reputation redistribution. The average rank change is again zero, but the mechanism clearly drives substantial reordering throughout the network, indicating BERP is a major force in reshaping the rankings based on positive endorsement chains.

\noindent
\textbf{Combined Impact (Full Model):} 
The comparison between the \textit{Full Model} and the \textit{Forward Only} ($3^{rd}$ of Fig.~\ref{fig:change_distribution} and $3^{rd}$ of Fig.~\ref{fig:scatter}) confirms that the complete backward propagation mechanism leads to extensive rank reshuffling. The rank change distribution is roughly symmetrical and centered at zero, similar to the \textit{With BERP} case, but encompasses the effects of both BEPP and BERP. Nodes can experience rank increases or decreases depending on the net effect of the penalty and reward
propagated back from their endorsees.

\noindent
\textbf{Relative Contributions:} 
Comparing the \textit{With BEPP} and \textit{With BERP} models to the \textit{Full Model} further clarifies their roles. The scatter plot comparing \textit{With BEPP} vs. \textit{Full Model} ($4^{th}$ of Fig.~\ref{fig:scatter}) shows considerable deviation from the $y=x$ line, indicating that adding the reward mechanism significantly alters the ranks established by penalty alone. Conversely, the scatter plot comparing \textit{With BERP} vs. \textit{Full Model} ($5^{th}$ of Fig.~\ref{fig:scatter}) shows points clustered much more closely to the $y=x$ line. This suggests that while the penalty mechanism introduces important adjustments (primarily penalizing nodes linked to negative feedback, thereby causing relative rank drops compared to the \textit{With BERP} scenario), the overall ranking structure is more strongly influenced by the reward propagation. BERP appears to be the primary driver of large-scale reordering, while BEPP acts as a crucial corrective measure.

\section{Related Work} 
\label{sec:related-works}
Evaluating the importance or trustworthiness of nodes within a network is a challenge addressed by numerous ranking and trust aggregation algorithms over the past decades~\cite{vieira2007efficient,cheng2011virtual,ji2011ranking,jiang2016understanding,jin2011axiomatic,rafailidis2017learning}. 
The concept of leveraging network structure for reputation assessment has been adapted and extended extensively, particularly for trust and reputation management in peer-to-peer (P2P) systems~\cite{kamvar2003eigentrust,zhou2007powertrust,parreira2008juxtaposed} and other decentralized networks~\cite{parreira2006efficient}.

Several trust models have built upon or diverged from PageRank~\cite{page1999pagerank} to suit specific application needs. EigenTrust~\cite{kamvar2003eigentrust} computes global trust scores as the principal left eigenvector of the normalized local trust matrix, often using pre-trusted peers to enhance robustness against malicious collectives. PeerTrust~\cite{xiong2004peertrust} introduces a multi-faceted trust metric, considering factors beyond simple interactions, such as transaction feedback quantity and credibility, context, and community factors. PowerTrust~\cite{zhou2007powertrust} aims to improve efficiency and accuracy over EigenTrust~\cite{kamvar2003eigentrust} by identifying highly reputable power nodes leveraging power-law distributions observed in trust networks and using a specialized random walk strategy. Other notable approaches include PageTrust~\cite{kerchove2008pagetrust}, GossipTrust~\cite{zhou2008gossiptrust}, FAST-PPR~\cite{lofgren2014fast}, GeTrust~\cite{meng2016getrust}, and GroupTrust~\cite{fan2016grouptrust}, each offering variations in how trust evidence is gathered, aggregated, or weighted.
AbsoluteTrust \cite{awasthi2020absolutetrust} shifts the focus from relative rankings to providing absolute trust scores, facilitating direct classification of peers without normalization. ShapleyTrust~\cite{bandhana2024trust} employs cooperative game theory, using the Shapley value~\cite{chen2023algorithms} to attribute global trust based on a peer's contribution to the trust within potential coalitions.
In our experiments, we focused on comparing against the most recent together with the ``classic'' trust models.

Despite these advancements, existing models do not use domain-specific knowledge and hence face the cold-start problem~\cite{victor2008whom,guo2014merging}. Furthermore, they mainly employ forward propagation and do not support distrust, or propagating negative feedback backward through the trust edges. Peers providing misleading endorsements may not face significant repercussions, potentially compromising the integrity of the reputation scores. The importance of penalizing unreliable endorsements has been noted previously \cite{metaxas2009enhancing,kerchove2008pagetrust,victor2013enhancing,yin2025didtrust}, yet integrated accountability mechanisms remain underexplored in mainstream models. In RepuLink, we addressed these limitations.

\section{Conclusion}
\label{sec:conclusion}
In this paper, we introduce RepuLink, a novel two-layer reputation model that tackles the limitations of existing trust models regarding endorser accountability. By incorporating innovative Backward Endorsement Penalty Propagation (BEPP) and Backward Endorsement Reward Propagation (BERP) mechanisms, our model recursively adjusts node reputation based on interaction outcomes. We provide formal guarantees for the convergence of our algorithms and perform comprehensive experiments on real-world datasets. RepuLink outperforms state-of-the-art baselines across various metrics, validating that incorporating backward propagation is crucial for achieving more accurate, accountable, and robust reputation assessment. In future work, we will focus on optimizing the algorithms for dynamic and extremely large-scale networks, alongside analyzing their resistance to and performance against various security threats and malicious usage~\cite{gunes2019identifying,fan2020decentralized}.


\begin{acks}
This work was partially funded by the Horizon Europe projects DATAPACT (No. 101189771) and RAISE Suite (No. 101188337).
\end{acks}

\clearpage
\bibliographystyle{ACM-Reference-Format}
\bibliography{sample-base}

\clearpage
\appendix
\section{Proof of Convergence}\label{sec:proof-of-convergence}
We formally analyze the convergence properties of \textit{RepuLink} by examining its propagation components.
Throughout this appendix, the analysis follows the idealized normalization used for proofs: the numerical stabilizer \(c\) in the main text is set to zero, and every normalized row is assumed to have positive mass.
Under this assumption, the normalized trustworthiness matrix \(T\) and endorsement matrix \(E\) are row-stochastic.
The convergence results below should therefore be read as statements about the propagation operators in a single time slot with fixed \(T\) and \(E\).

\noindent
\textbf{Convergence of Forward Propagation.}
We first provide the convergence proof of the forward propagation in Eq.~\eqref{eq:reputation}.

\begin{definition}[Simplex]
Let \(\Delta_N := \{R \in \mathbb{R}^N \mid R_i \geq 0,\; \sum_{i=1}^N R_i = 1\}\) denote the probability simplex.
It represents all valid reputation vectors, each with non-negative entries summing to one.
\end{definition}

\begin{definition}[Forward Reputation Operator]
Let \(T,E \in \mathbb{R}^{N \times N}\) be row-stochastic matrices and let \(\alpha \in (0,1)\).
Define \(W := \alpha T^\top + (1-\alpha)E^\top\).
This operator combines interaction-based trust and endorsement-based trust.
\end{definition}

\begin{lemma}[Mapping to the Simplex]
If \(R \in \Delta_N\), then \(WR \in \Delta_N\).
\end{lemma}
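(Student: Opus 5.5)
The plan is to verify the two defining properties of \(\Delta_N\) separately for \(WR = \alpha T^\top R + (1-\alpha)E^\top R\): entrywise nonnegativity and unit \(\ell_1\) mass. Both follow from the facts already fixed in this appendix. Setting \(c=0\) and assuming every normalized row has positive mass makes \(T\) and \(E\) row-stochastic, meaning nonnegative entries with \(T\mathbf{1} = \mathbf{1}\) and \(E\mathbf{1}=\mathbf{1}\). We also have \(\alpha\in(0,1)\).

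For nonnegativity, I would note that \(T_{kj}\ge 0\) and \(E_{kj}\ge 0\). This holds because \(E\) is built from \(\hat E_{ij}=\epsilon(i,j)\in[0,1]\), and \(T\) is built from \(\max(\hat T_{ij},0)\), with nonnegative denominators in both cases. It follows that \([WR]_j=\sum_k(\alpha T_{kj}+(1-\alpha)E_{kj})R_k\) is a sum of products of nonnegative numbers whenever \(R_k\ge 0\). Since \(\alpha\) and \(1-\alpha\) are both positive, each summand is nonnegative, and so \([WR]_j\ge 0\).

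For the mass, I would compute \(\mathbf{1}^\top W R = \alpha(T\mathbf{1})^\top R + (1-\alpha)(E\mathbf{1})^\top R\), using \(\mathbf{1}^\top T^\top = (T\mathbf{1})^\top\). Row-stochasticity gives \(T\mathbf{1}=E\mathbf{1}=\mathbf{1}\), so this equals \(\alpha\,\mathbf{1}^\top R+(1-\alpha)\,\mathbf{1}^\top R=\mathbf{1}^\top R=1\). Equivalently, \(W\) is column-stochastic as a convex combination of the column-stochastic matrices \(T^\top\) and \(E^\top\), and column-stochastic nonnegative matrices preserve total mass.

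No step is a real obstacle. The only point that needs care is making explicit that the idealization \(c=0\) with positive row mass is what turns \(\Vert T^\top\Vert_1=1\) into exact row sums of one. With \(c>0\) the row sums would be strictly less than one, and mass would only be nonincreasing. That is precisely why the main text renormalizes via Eq.~\eqref{eq:reputation-normalization}.
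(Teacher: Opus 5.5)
Your proposal is correct and follows essentially the same route as the paper: both rest on $T^\top$ and $E^\top$ being nonnegative and column-stochastic under the idealization $c=0$, so that their convex combination $W$ preserves nonnegativity and total mass. Your explicit computation of $\mathbf{1}^\top W R$ and your entrywise nonnegativity check spell out what the paper's proof states in a line.
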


\begin{proof}
Since \(T\) and \(E\) are row-stochastic, \(T^\top\) and \(E^\top\) are column-stochastic.
Thus \(W\) is column-stochastic, as \(\sum_{i=1}^N W_{ij}=\alpha \sum_{i=1}^N T^\top_{ij}+(1-\alpha)\sum_{i=1}^N E^\top_{ij}=1\).
The entries of \(W\) are non-negative, so multiplying a vector in \(\Delta_N\) by \(W\) preserves non-negativity and total mass.
Hence \(WR \in \Delta_N\).
\end{proof}

\begin{lemma}[Linearity and Continuity]
The operator \(R \mapsto WR\) is linear and continuous on \(\mathbb{R}^N\).
\end{lemma}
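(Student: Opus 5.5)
The plan is to verify the two properties directly from the definition of $W$ as a fixed matrix. Within a single time slot, the Forward Reputation Operator definition fixes $T$, $E$ and $\alpha$, so $W=\alpha T^\top+(1-\alpha)E^\top$ is a fixed $N\times N$ real matrix that does not depend on $R$.

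Linearity is then immediate from the distributivity of matrix--vector multiplication. For any $R,R'\in\mathbb{R}^N$ and scalars $a,b$, we have $W(aR+bR')=aWR+bWR'$, with no further structure needed.

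For continuity, I will give an explicit Lipschitz bound rather than appeal to the general fact that linear maps on finite-dimensional spaces are continuous. The $\ell_1$ norm is the natural choice here, because the reputation vectors live in $\Delta_N$. The induced operator norm $\|W\|_1$ is the maximum column sum of $|W_{ij}|$. The proof of the Mapping to the Simplex lemma already showed that every entry of $W$ is nonnegative and that $W$ is column-stochastic. Hence $\|W\|_1=1$, which gives
\[
\|WR-WR'\|_1=\|W(R-R')\|_1\le \|R-R'\|_1 .
\]
So the map is $1$-Lipschitz, in particular nonexpansive, and therefore continuous. Since all norms on $\mathbb{R}^N$ are equivalent, continuity holds in any norm.

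There is no real obstacle in this argument. The only point needing care is to stress that $W$ is held fixed, because the $T$ and $E$ updates and the nonlinear normalization in Eq.~\eqref{eq:reputation-normalization} happen between time slots and are not part of this operator. I will also record the nonexpansiveness explicitly, since it is likely what the subsequent fixed-point or convergence argument (Brouwer on the compact convex set $\Delta_N$, or Perron--Frobenius) will rely on.
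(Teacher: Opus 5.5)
Your proof is correct. The linearity half matches the paper's, but the continuity half takes a different route.

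\textbf{Paper's route.} For continuity, the paper simply invokes the fact that every linear map on a finite-dimensional space is continuous. That argument uses no structure of $W$ and does not depend on the Mapping to the Simplex lemma.

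\textbf{Your route.} You use nonnegativity and column-stochasticity of $W$ to get $\|W\|_1=1$. Hence $R\mapsto WR$ is $1$-Lipschitz in $\ell_1$ on all of $\mathbb{R}^N$. The bound $\|W\|_1\le 1$ would even survive the stabilizer $c>0$, since column sums can then only drop below $1$.

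\textbf{What each buys.} Your argument costs a dependence on the row-stochasticity assumptions, but it gives an explicit modulus of continuity where the paper gives only bare continuity. The paper never needs more than that. The Brouwer step for $\mathcal{U}(R)=\Pi_+\bigl(W(R-\pi+\rho)\bigr)$ uses only continuity. Nonexpansiveness alone cannot give convergence of the power iteration (a permutation matrix is also nonexpansive), which is why the forward result relies on primitivity and Perron--Frobenius.

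Where your bound would genuinely help is in combination with the local Lipschitz lemma for $\Pi_+$ in Appendix~\ref{sec:convergence-speed-analysis}. Since $W(R-\pi+\rho)-W(R'-\pi+\rho)=W(R-R')$, whenever both images have positive mass at least $m$ you obtain $\|\mathcal{U}(R)-\mathcal{U}(R')\|_1\le (2/m)\|R-R'\|_1$.
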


\begin{proof}
The matrix \(W\) is a convex combination of two fixed linear operators, \(T^\top\) and \(E^\top\).
Therefore \(R \mapsto WR\) is linear.
All finite-dimensional linear maps are continuous.
\end{proof}

\begin{theorem}[Single-Slot Convergence of Forward Propagation]
Fix a time slot and assume \(T\) and \(E\) are fixed row-stochastic matrices.
Let \(W=\alpha T^\top+(1-\alpha)E^\top\).
If \(W\) is primitive, then the iteration \(R_{k+1}=WR_k\), \(R_0\in\Delta_N\), converges to a unique vector \(R^*\in\Delta_N\) satisfying \(WR^*=R^*\).
\end{theorem}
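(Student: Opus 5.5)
The plan is to invoke Perron--Frobenius theory for primitive non-negative matrices, using the two preceding lemmas for invariance of the simplex and linearity.

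\textbf{Step 1: the Perron root is $1$.} By the Mapping to the Simplex lemma, $W$ is non-negative and column-stochastic, so $\mathbf{1}^\top W=\mathbf{1}^\top$. Hence $1$ is an eigenvalue of $W^\top$, and therefore of $W$. Since $\|W\|_1=1$ (the maximum column sum), the spectral radius satisfies $\rho(W)\le 1$, so $\rho(W)=1$.

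\textbf{Step 2: apply Perron--Frobenius.} Because $W$ is primitive, the Perron--Frobenius theorem says that $\rho(W)=1$ is a simple eigenvalue. It has a strictly positive right eigenvector, which we normalize to $R^*\in\Delta_N$. Every other eigenvalue $\mu$ satisfies $|\mu|<1$. Uniqueness in $\Delta_N$ follows: any $R\in\Delta_N$ with $WR=R$ lies in the one-dimensional eigenspace, and the normalization $\mathbf{1}^\top R=1$ pins it down to $R^*$.

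\textbf{Step 3: convergence of the iteration.} Use the Jordan decomposition, or equivalently the standard consequence of primitivity, $W^k\to R^*u^\top$ as $k\to\infty$. Here $u$ is the left Perron vector, normalized so that $u^\top R^*=1$. Since the left eigenvector for eigenvalue $1$ is $\mathbf{1}$ and $\mathbf{1}^\top R^*=1$, we get $u=\mathbf{1}$. Then
\[
R_k=W^kR_0\to R^*\,\mathbf{1}^\top R_0=R^*,
\]
using $R_0\in\Delta_N$. The error is $O(k^{m}|\mu_2|^k)$, where $\mu_2$ is the subdominant eigenvalue and $m$ is one less than the size of its largest Jordan block. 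By the Mapping lemma all iterates stay in $\Delta_N$, so the normalization in Eq.~\eqref{eq:reputation-normalization} (with $c=0$) acts as the identity and does not interfere.

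\textbf{Main obstacle.} The key step is justifying the limit $W^k\to R^*\mathbf{1}^\top$, that is, showing the non-Perron spectral part decays. I would split $\mathbb{R}^N=\mathrm{span}(R^*)\oplus\ker(\mathbf{1}^\top)$. Both subspaces are $W$-invariant: $\ker(\mathbf{1}^\top)$ is invariant because $\mathbf{1}^\top W=\mathbf{1}^\top$. The restriction of $W$ to $\ker(\mathbf{1}^\top)$ has spectral radius $|\mu_2|<1$, so its powers tend to $0$ by Gelfand's formula.

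Finally, write $R_0=R^*+(R_0-R^*)$. The second term lies in $\ker(\mathbf{1}^\top)$ because both vectors sum to one. Its image under $W^k$ therefore vanishes, which completes the argument.
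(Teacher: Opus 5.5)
Your proof is correct and follows the same route as the paper: both rest on the Perron--Frobenius theorem applied to the primitive, column-stochastic matrix $W$. You also make explicit three steps the paper's proof only asserts: that $\rho(W)=1$ via the column sums; that the left Perron vector is $\mathbf{1}$, so the limit is $R^*$ for every $R_0\in\Delta_N$; and that $W^k$ decays on the invariant complement $\ker(\mathbf{1}^\top)$.
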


\begin{proof}
By the previous lemma, \(W\) is column-stochastic and maps \(\Delta_N\) into itself.
Since \(W\) is primitive, the Perron-Frobenius theorem implies that the eigenvalue \(1\) is simple and dominant, and that the corresponding eigenvector can be chosen strictly positive.
After normalizing this eigenvector to have \(\ell_1\)-norm one, we obtain a unique \(R^*\in\Delta_N\).
Moreover, for every \(R_0\in\Delta_N\), the power iteration \(R_{k+1}=WR_k\) converges to \(R^*\).
\end{proof}

\begin{remark}[Relation to the Full RepuLink Update]
The result above proves convergence of the forward propagation operator within a single time slot, where \(T\) and \(E\) are fixed.
For the full dynamic RepuLink process, \(T\), \(E\), and the backward penalty/reward signals may change across time slots.
Thus, unconditional global convergence requires additional stability assumptions.
For example, if the time-varying matrices and propagated signals eventually stabilize, and the limiting forward operator is primitive, then the single-slot convergence result extends to the limiting RepuLink update by continuity of the update components.
\end{remark}

\noindent
\textbf{Convergence of Backward Propagation.}
We next analyze the penalty and reward propagation mechanisms through a unified recursive signal model.

\begin{definition}[Backward Signal Vectors]
Let \(s_\pi := \mathbf{1}-g(\mathcal{N})\) and \(s_\rho := r(\mathcal{P})-\mathbf{1}\).
Here \(s_\pi\) is the penalty signal propagated in Eq.~\eqref{eq:endorser penalty}, and \(s_\rho\) is the reward signal propagated in Eq.~\eqref{eq:endorser rewards}.
For finite feedback scores, \(s_\pi\) and \(s_\rho\) are bounded and non-negative.
\end{definition}

\begin{definition}[Recursive Signal Propagation]
Let \(E\in\mathbb{R}^{N\times N}\) be the row-stochastic endorsement matrix and let \(s\in\mathbb{R}^N\) be a bounded signal vector, such as \(s_\pi\) or \(s_\rho\).
For \(\gamma\in(0,1)\), define \(\phi(s) := \sum_{k=1}^{\infty} \gamma^k E^k s\).
\end{definition}

\begin{lemma}[Bounded Recursive Signal Terms]\label{lemma:signal-terms}
Let \(E\) be row-stochastic and suppose \(\|s\|_\infty \leq M\).
Then, for every \(k\geq 1\), \(\|\gamma^k E^k s\|_\infty \leq \gamma^k M\).
\end{lemma}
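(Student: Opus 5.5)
The plan is to show that multiplication by a row-stochastic matrix does not increase the $\ell_\infty$ norm, i.e.\ $\|Ex\|_\infty \le \|x\|_\infty$ for every $x\in\mathbb{R}^N$. The bound for $\gamma^k E^k s$ then follows by iterating this contraction and pulling out the scalar $\gamma^k$. Throughout I will use the appendix's idealized setting ($c=0$, every normalized row has positive mass), so $E$ has non-negative entries and each row sums to one.

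\textbf{Step 1 (one-step bound).} Fix $x\in\mathbb{R}^N$ and an index $i$. Because $E_{ij}\ge 0$ and $\sum_j E_{ij}=1$, the triangle inequality gives
\[
|(Ex)_i| = \Bigl|\sum_{j} E_{ij}x_j\Bigr| \le \sum_j E_{ij}|x_j| \le \|x\|_\infty \sum_j E_{ij} = \|x\|_\infty .
\]
Taking the maximum over $i$ yields $\|Ex\|_\infty\le\|x\|_\infty$. Equivalently, the induced $\ell_\infty$ operator norm of $E$ is its maximum absolute row sum, which equals $1$.

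\textbf{Step 2 (powers of $E$).} By induction on $k$, $\|E^k s\|_\infty\le\|s\|_\infty$. The case $k=1$ is Step 1. For the inductive step, $E^{k+1}s=E(E^k s)$, so Step 1 applied with $x=E^k s$ gives $\|E^{k+1}s\|_\infty\le\|E^k s\|_\infty\le\|s\|_\infty$. As an alternative, I could note that $E^k$ is itself row-stochastic, since products of row-stochastic matrices are row-stochastic, and apply Step 1 directly to $E^k$.

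\textbf{Step 3 (scaling).} Homogeneity of the norm gives $\|\gamma^k E^k s\|_\infty = \gamma^k\|E^k s\|_\infty \le \gamma^k\|s\|_\infty \le \gamma^k M$. Here I use $\gamma^k>0$ and the hypothesis $\|s\|_\infty\le M$.

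There is no real obstacle in this argument. The only point needing care is that row-stochasticity must include non-negativity of the entries, because Step 1 replaces $|E_{ij}|$ by $E_{ij}$. That is guaranteed here, since $\hat E_{ij}=\epsilon(i,j)\in[0,1]$ and the normalization divides by a positive row sum.

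This lemma is what will later give absolute convergence of $\phi(s)$. Its terms are bounded by the geometric series $\sum_k\gamma^k M=\gamma M/(1-\gamma)$, and that convergence in turn justifies the closed form $\gamma E(I-\gamma E)^{-1}s$.
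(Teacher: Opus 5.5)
Your proof is correct and follows the paper's argument. The paper cites $\|E\|_\infty=1$ for row-stochastic $E$ and applies submultiplicativity, $\|E^k s\|_\infty\le\|E\|_\infty^k\|s\|_\infty\le M$, which your Steps 1--2 derive by hand; your explicit note that the row-sum computation requires non-negative entries is a detail the paper leaves implicit.
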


\begin{proof}
For a row-stochastic matrix, \(\|E\|_\infty=1\).
By submultiplicativity of the induced norm, \(\|E^k s\|_\infty \leq \|E^k\|_\infty\|s\|_\infty \leq \|E\|_\infty^k \|s\|_\infty \leq M\).
Multiplying by \(\gamma^k\) gives the result.
\end{proof}

\begin{lemma}[Absolute Convergence of Recursive Signal Series]\label{lemma:signal-series}
The series \(\sum_{k=1}^{\infty}\gamma^k E^k s\) converges absolutely in the \(\ell_\infty\) norm.
\end{lemma}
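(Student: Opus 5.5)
The plan is to compare the series with a geometric series, using Lemma~\ref{lemma:signal-terms}. Fix a bounded signal \(s\) with \(\|s\|_\infty \leq M\) (e.g.\ \(M=1\) for \(s_\pi\), whose entries \(1-e^{-\beta\mathcal{N}_j}\) lie in \([0,1)\), and likewise for \(s_\rho\), whose entries \(1-e^{-\lambda\mathcal{P}_j}\) lie in \([0,1)\)). By Lemma~\ref{lemma:signal-terms}, each term satisfies \(\|\gamma^k E^k s\|_\infty \leq \gamma^k M\). Summing the norms gives
\[
\sum_{k=1}^{\infty}\|\gamma^k E^k s\|_\infty \;\leq\; M\sum_{k=1}^{\infty}\gamma^k \;=\; \frac{\gamma M}{1-\gamma} < \infty,
\]
since \(\gamma\in(0,1)\). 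This is exactly absolute convergence in the \(\ell_\infty\) norm.

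To conclude that the series actually converges to a vector, I would invoke completeness of \((\mathbb{R}^N,\|\cdot\|_\infty)\). Concretely, the partial sums \(\pi^{(K)}=\sum_{k=1}^{K}\gamma^kE^ks\) form a Cauchy sequence, because for \(K'>K\)
\[
\|\pi^{(K')}-\pi^{(K)}\|_\infty \leq \sum_{k=K+1}^{K'}\gamma^kM \leq \frac{\gamma^{K+1}M}{1-\gamma}.
\]
This tail bound also justifies the paper's stopping rule based on \(\delta\) and the truncation at \(K\) hops. It gives an explicit truncation error: the limit \(\phi(s)\) differs from \(\pi^{(K)}\) by at most \(\gamma^{K+1}M/(1-\gamma)\) in \(\ell_\infty\), and hence by at most \(N\) times that in \(\ell_1\).

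As a final consistency step, I would identify the limit with the closed form in Eqs.~\eqref{eq:endorser penalty} and~\eqref{eq:endorser rewards}. Since \(\|\gamma E\|_\infty=\gamma<1\), the Neumann series \(\sum_{k\ge0}(\gamma E)^k\) converges to \((I-\gamma E)^{-1}\). Multiplying on the left by \(\gamma E\) and applying the result to \(s\) then gives \(\phi(s)=\gamma E(I-\gamma E)^{-1}s\).

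No step is genuinely hard. The only point needing care is the distinction between absolute convergence (summability of the norms) and convergence of the vector series itself, which the completeness argument resolves. The identification with the closed form is a bonus rather than part of the lemma.
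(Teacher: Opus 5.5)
Your proposal is correct and follows the paper's proof: bound each term by \(\gamma^k M\) using Lemma~\ref{lemma:signal-terms}, then sum the geometric series to get \(\gamma M/(1-\gamma)<\infty\). The Cauchy/completeness step and the Neumann-series identification are not needed for the lemma; the paper proves the closed form separately in its next theorem.
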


\begin{proof}
By Lemma~\ref{lemma:signal-terms}, \(\sum_{k=1}^{\infty}\|\gamma^k E^k s\|_\infty \leq \sum_{k=1}^{\infty}\gamma^k M = \gamma M/(1-\gamma)<\infty\).
Hence the recursive signal series converges absolutely.
\end{proof}

\begin{theorem}[Convergence of Recursive Signal Propagation]
Let \(E\) be row-stochastic, let \(\gamma\in(0,1)\), and let \(s\) be bounded.
Then \(\phi(s)=\sum_{k=1}^{\infty}\gamma^k E^k s\) converges and admits the closed form \(\phi(s)=\gamma E{(I-\gamma E)}^{-1}s\).
\end{theorem}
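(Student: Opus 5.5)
Convergence itself follows directly from Lemma~\ref{lemma:signal-series}. The series $\sum_{k\ge1}\gamma^k E^k s$ converges absolutely in $\ell_\infty$, and $\mathbb{R}^N$ is complete, so the partial sums $\phi_K(s):=\sum_{k=1}^{K}\gamma^k E^k s$ converge to a limit $\phi(s)$. The bound $\|\phi(s)\|_\infty\le \gamma M/(1-\gamma)$ comes for free from the same estimate. The remaining work is to identify the limit with the closed form.

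First I will establish that $I-\gamma E$ is invertible. Since $E$ is row-stochastic, $\|E\|_\infty=1$, so $\|\gamma E\|_\infty=\gamma<1$. Every eigenvalue $\mu$ of $\gamma E$ therefore satisfies $|\mu|\le \|\gamma E\|_\infty<1$, so $1$ is not an eigenvalue and $I-\gamma E$ is nonsingular. The same norm bound also shows that the Neumann series $\sum_{k\ge0}\gamma^k E^k$ converges in operator norm, because its terms are bounded by $\gamma^k$.

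Next I will identify the limit through a telescoping identity rather than by quoting the Neumann series abstractly. For each $K$,
\begin{equation*}
(I-\gamma E)\,\phi_K(s)=\sum_{k=1}^{K}\gamma^k E^k s-\sum_{k=2}^{K+1}\gamma^k E^k s=\gamma E s-\gamma^{K+1}E^{K+1}s .
\end{equation*}
By Lemma~\ref{lemma:signal-terms}, $\|\gamma^{K+1}E^{K+1}s\|_\infty\le\gamma^{K+1}M\to0$. Since $\phi_K(s)\to\phi(s)$ and $I-\gamma E$ is a fixed linear map, hence continuous, letting $K\to\infty$ gives $(I-\gamma E)\phi(s)=\gamma E s$. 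Thus $\phi(s)=(I-\gamma E)^{-1}\gamma E s$.

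Finally I will commute the factors. The matrices $E$ and $I-\gamma E$ commute, so $E$ also commutes with $(I-\gamma E)^{-1}$, and therefore $(I-\gamma E)^{-1}\gamma E=\gamma E(I-\gamma E)^{-1}$. This is exactly the stated form $\phi(s)=\gamma E(I-\gamma E)^{-1}s$.

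The only delicate point is the invertibility of $I-\gamma E$ together with the passage to the limit. Both reduce to the contraction bound $\|\gamma E\|_\infty=\gamma<1$, which holds under the appendix's idealization ($c=0$, row-stochastic $E$). In the main text $c>0$ makes $E$ only row-substochastic; the bound $\|E\|_\infty\le1$ still holds there, so the argument carries over unchanged.
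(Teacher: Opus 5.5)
Your proposal is correct and follows essentially the same route as the paper: both get convergence from Lemma~\ref{lemma:signal-series} and the closed form from the bound \(\|\gamma E\|_\infty=\gamma<1\). The paper simply invokes the Neumann series \(\sum_{k\ge0}(\gamma E)^k=(I-\gamma E)^{-1}\) and factors out \(\gamma E\); your telescoping identity, together with the explicit invertibility argument, is a self-contained proof of that same Neumann-series step.
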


\begin{proof}
The convergence follows from Lemma~\ref{lemma:signal-series}.
Since \(\|E\|_\infty=1\), we have \(\|\gamma E\|_\infty=\gamma<1\).
Thus the Neumann series is valid, \(\sum_{k=0}^{\infty}{(\gamma E)}^k = {(I-\gamma E)}^{-1}\).

\noindent
Therefore, \(\phi(s)=\sum_{k=1}^{\infty}\gamma^k E^k s=\gamma E\sum_{k=0}^{\infty}{(\gamma E)}^k s=\gamma E{(I-\gamma E)}^{-1}s\).
\end{proof}

\begin{remark}
Taking \(s=s_\pi=\mathbf{1}-g(\mathcal{N})\) gives the infinite-depth counterpart of the backward endorsement penalty vector \(\pi\), and taking \(s=s_\rho=r(\mathcal{P})-\mathbf{1}\) gives the corresponding reward vector \(\rho\).
The finite-\(K\) versions in the main text are truncated approximations of these limits.
\end{remark}

\noindent
\textbf{Normalization and Fixed Points.}
After backward adjustment and the subsequent reputation propagation step in Alg.~\ref{alg:iterative_reputation_update}, RepuLink normalizes the corrected reputation vector.
In the proof setting \(c=0\), the normalization is well-defined whenever the positive part of the corrected vector has nonzero mass.

\begin{definition}[Positive Simplex Projection]
For any \(x\in\mathbb{R}^N\) with \(\|\max(x,0)\|_1>0\), define \(\Pi_+(x):=\max(x,0)/\|\max(x,0)\|_1\).
\end{definition}

\begin{lemma}[Continuity of Normalization]\label{lemma:normalization-continuity-adjusted}
The mapping \(\Pi_+\) is continuous on the domain \(\Omega := \{x\in\mathbb{R}^N \mid \|\max(x,0)\|_1>0\}\).
Moreover, \(\Pi_+(x)\in\Delta_N\) for every \(x\in\Omega\).
\end{lemma}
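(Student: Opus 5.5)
The argument splits into the two claims of Lemma~\ref{lemma:normalization-continuity-adjusted}, and both follow by writing $\Pi_+$ as a composition of elementary continuous maps.

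\emph{Continuity.} Let $m(x):=\max(x,0)$ act componentwise, so that $m:\mathbb{R}^N\to\mathbb{R}^N$. Each coordinate $x_i\mapsto\max(x_i,0)=(x_i+|x_i|)/2$ is continuous (indeed $1$-Lipschitz), so $m$ is continuous on all of $\mathbb{R}^N$. Next, the map $\nu(x):=\|m(x)\|_1=\sum_i\max(x_i,0)$ is a finite sum of continuous functions and hence continuous. On $\Omega$ we have $\nu(x)>0$ by definition, so $x\mapsto 1/\nu(x)$ is continuous on $\Omega$, because reciprocation is continuous on $(0,\infty)$. It follows that $\Pi_+(x)=\nu(x)^{-1}m(x)$ is continuous on $\Omega$, being the product of a continuous scalar function and a continuous vector-valued function.

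As a side remark, $\Omega$ is open: it is the preimage $\nu^{-1}((0,\infty))$ of an open set under a continuous map. So the continuity is genuinely local, and every $x\in\Omega$ has a neighbourhood contained in $\Omega$. This is what later fixed-point and continuity arguments will need.

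\emph{Membership in the simplex.} For $x\in\Omega$, each coordinate $[\Pi_+(x)]_i=\max(x_i,0)/\nu(x)$ is non-negative, since its numerator is non-negative and its denominator is positive. Summing over $i$ gives
$\sum_i[\Pi_+(x)]_i=\nu(x)/\nu(x)=1$. Hence $\Pi_+(x)\in\Delta_N$.

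There is no real obstacle here. The only point that needs care is the role of the domain restriction $\Omega$, which keeps the denominator strictly positive. Without it, $\Pi_+$ is undefined, for example at $x\le 0$. Near the boundary of $\Omega$ the map cannot be extended continuously: $\Pi_+(te_1)=e_1$ and $\Pi_+(te_2)=e_2$ for every $t>0$, yet both paths tend to $0\notin\Omega$. This is why the lemma is stated only on $\Omega$.
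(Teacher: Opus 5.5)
Your proof is correct and matches the paper's argument: continuity of the componentwise rectifier, together with a continuous denominator that is strictly positive on $\Omega$, gives continuity of $\Pi_+$, and non-negativity plus unit $\ell_1$-mass gives $\Pi_+(x)\in\Delta_N$. Your extra observations, that $\Omega$ is open and that $\Pi_+$ admits no continuous extension to $0$, are also correct, though the lemma does not need them.
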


\begin{proof}
The element-wise map \(x\mapsto \max(x,0)\) is continuous, and the denominator \(\|\max(x,0)\|_1\) is continuous and strictly positive on \(\Omega\).
Thus \(\Pi_+\) is continuous on \(\Omega\).
By construction, \(\Pi_+(x)\) is non-negative and has \(\ell_1\)-norm one, so \(\Pi_+(x)\in\Delta_N\).
\end{proof}

\begin{theorem}[Existence of a Fixed Point for a Fixed-Slot Normalized Update]\label{thm:normalization-fixedpoint-adjusted}
Fix a time slot and treat \(W\), \(\pi\), and \(\rho\) as fixed.
Define the normalized update map \(\mathcal{U}(R):=\Pi_+\bigl(W(R-\pi+\rho)\bigr)\), \(R\in\Delta_N\), which follows the update order in Alg.~\ref{alg:iterative_reputation_update}.
Assume \(W(R-\pi+\rho)\in\Omega\) for every \(R\in\Delta_N\).
Then \(\mathcal{U}\) has at least one fixed point \(R^*\in\Delta_N\).
\end{theorem}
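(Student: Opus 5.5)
The plan is to apply Brouwer's fixed-point theorem to \(\mathcal{U}\) on \(\Delta_N\). Three ingredients are needed: \(\Delta_N\) is a nonempty compact convex subset of \(\mathbb{R}^N\), \(\mathcal{U}\) maps \(\Delta_N\) into itself, and \(\mathcal{U}\) is continuous on \(\Delta_N\).

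\emph{Step 1: the domain.} \(\Delta_N\) is nonempty, since it contains \(\frac{1}{N}\mathbf{1}\). It is the intersection of the closed half-spaces \(\{R_i\ge 0\}\) with the hyperplane \(\{\sum_i R_i=1\}\), so it is closed and convex. It lies in the unit \(\ell_1\)-ball, so it is bounded, and hence compact in finite dimensions.

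\emph{Step 2: self-mapping and continuity.}
\begin{itemize}
\item Write \(\mathcal{U}=\Pi_+\circ A\), where \(A(R):=W(R-\pi+\rho)=WR+W(\rho-\pi)\). With \(W,\pi,\rho\) fixed, \(A\) is an affine map of \(\mathbb{R}^N\): a translate of the linear, continuous map \(R\mapsto WR\) from the Linearity and Continuity lemma. So \(A\) is continuous.
\item By hypothesis, \(A(\Delta_N)\subseteq\Omega\). Lemma~\ref{lemma:normalization-continuity-adjusted} says \(\Pi_+\) is continuous on \(\Omega\) with values in \(\Delta_N\).
\item Hence \(\mathcal{U}\) is well defined on \(\Delta_N\), continuous as a composition of continuous maps, and satisfies \(\mathcal{U}(\Delta_N)\subseteq\Delta_N\).
\end{itemize}
The non-negativity of \(W\) and its column-stochasticity from the Mapping to the Simplex lemma are not needed here. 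The only role of the normalization is to bring the shifted vector back into \(\Delta_N\).

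\emph{Step 3: conclusion.} Brouwer's theorem applies to a continuous self-map of a nonempty compact convex subset of \(\mathbb{R}^N\). It yields \(R^*\in\Delta_N\) with \(\mathcal{U}(R^*)=R^*\).

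The main obstacle is the well-definedness of \(\Pi_+\) along \(A(\Delta_N)\). Without it, the composition could hit a point where the positive part vanishes and be discontinuous or undefined. This is exactly why the theorem assumes \(W(R-\pi+\rho)\in\Omega\) for all \(R\in\Delta_N\). I would stress that this assumption is used only to invoke Lemma~\ref{lemma:normalization-continuity-adjusted} on the image of \(A\).

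Two further remarks belong in the write-up. First, if one wants a uniform margin, \(A(\Delta_N)\) is compact, so \(\|\max(A(R),0)\|_1\) attains a positive minimum on \(\Delta_N\). Second, Brouwer gives existence but not uniqueness or convergence of the iteration; those would require contraction-type assumptions that are not part of this statement.
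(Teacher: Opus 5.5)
Your proposal is correct and follows the paper's own route. It uses continuity of \(R\mapsto W(R-\pi+\rho)\), the hypothesis that this map sends \(\Delta_N\) into \(\Omega\), Lemma~\ref{lemma:normalization-continuity-adjusted} to make \(\Pi_+\) continuous and \(\Delta_N\)-valued, and then Brouwer on the compact convex set \(\Delta_N\); you just spell out details the paper leaves implicit, such as nonemptiness and compactness of \(\Delta_N\) and the affine form of the pre-normalization map.
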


\begin{proof}
The map \(R\mapsto W(R-\pi+\rho)\) is continuous, and by assumption its image lies in \(\Omega\).
By Lemma~\ref{lemma:normalization-continuity-adjusted}, \(\Pi_+\) is continuous on this image.
Therefore \(\mathcal{U}\) is a continuous self-map from the compact convex set \(\Delta_N\) to itself.
Brouwer's fixed-point theorem guarantees the existence of at least one \(R^*\in\Delta_N\) such that \(\mathcal{U}(R^*)=R^*\).
\end{proof}

\begin{remark}
The fixed-point result establishes existence for the fixed-slot normalized update.
It does not by itself imply uniqueness or convergence of repeated normalized updates; such claims require stronger contraction or monotonicity assumptions.
\end{remark}

\section{Convergence Speed Analysis}\label{sec:convergence-speed-analysis}
We analyze the convergence rates of the three major components of the reputation update procedure: forward propagation, backward propagation, and normalization.

\noindent
\textbf{Forward Propagation.}
Let \(W:=\alpha T^\top+(1-\alpha)E^\top\).
Assume \(W\) is primitive and column-stochastic.

\begin{definition}[Second Largest Eigenvalue Modulus]
Let \(\mu_1=1,\mu_2,\ldots,\mu_N\) be the eigenvalues of \(W\), ordered so that \(|\mu_1|=1\), and define \(\sigma := \max_{i\geq 2}|\mu_i| < 1\).
The value \(\sigma\) is the second largest eigenvalue modulus (SLEM).
\end{definition}

\begin{theorem}[Convergence Rate of Forward Propagation]
Let \(R_k\) be the forward propagation iterate in a fixed time slot.
For any \(\eta\) satisfying \(\sigma<\eta<1\), there exists a constant \(\kappa_\eta>0\) such that, for all \(R_0\in\Delta_N\), \(\|R_k-R^*\|_1 \leq \kappa_\eta \eta^k\).
If \(W\) is diagonalizable, the same bound can be stated with \(\eta=\sigma\) for a suitable constant.
\end{theorem}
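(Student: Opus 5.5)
The plan is to decompose the iterate along the Perron eigendirection and a complementary invariant subspace, and then to bound the powers of \(W\) on that complement using Gelfand's spectral radius formula.

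\textbf{Step 1 (spectral splitting).} By the single-slot forward convergence theorem, \(W\) is primitive and column-stochastic, so \(1\) is a simple dominant eigenvalue. Its right eigenvector is \(R^*\in\Delta_N\), and its left eigenvector is \(\mathbf{1}^\top\), because \(\mathbf{1}^\top W=\mathbf{1}^\top\). Define the spectral projector \(P:=R^*\mathbf{1}^\top\). Then \(P^2=P\), \(WP=PW=P\), and \(\mathbb{R}^N=\operatorname{range}(P)\oplus\ker(\mathbf{1}^\top)\), where \(\ker(\mathbf{1}^\top)=\{x:\sum_i x_i=0\}\) is \(W\)-invariant. The restriction \(B:=W-P\) has spectrum \(\{0,\mu_2,\ldots,\mu_N\}\), so its spectral radius satisfies \(\varrho(B)=\sigma<1\). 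To verify this I will check that \(B\) kills \(R^*\) and agrees with \(W\) on \(\ker(\mathbf{1}^\top)\).

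\textbf{Step 2 (error recursion).} For \(R_0\in\Delta_N\) we have \(\mathbf{1}^\top R_0=1\), so \(PR_0=R^*\). The error \(e_0:=R_0-R^*\) lies in \(\ker(\mathbf{1}^\top)\), and \(e_k:=R_k-R^*=W^k e_0=B^k e_0\), since \(Pe_0=0\) and \(W^k=P+B^k\) (this follows from \(PB=BP=0\)). Also \(\|e_0\|_1\le 2\) on the simplex. Hence \(\|R_k-R^*\|_1\le 2\|B^k\|_1\).

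\textbf{Step 3 (bounding \(\|B^k\|\)).} This is the only nontrivial step. By Gelfand's formula \(\lim_k\|B^k\|_1^{1/k}=\varrho(B)=\sigma\). Given \(\eta\in(\sigma,1)\), there is therefore a \(k_0\) with \(\|B^k\|_1\le\eta^k\) for all \(k\ge k_0\). Setting
\[
C_\eta:=\max\Bigl\{1,\max_{0\le k<k_0}\|B^k\|_1\eta^{-k}\Bigr\}
\]
gives \(\|B^k\|_1\le C_\eta\eta^k\) for all \(k\ge0\), and the claim follows with \(\kappa_\eta:=2C_\eta\), uniformly in \(R_0\). An alternative route is a Jordan-form bound \(\|B^k\|\le C k^{m-1}\sigma^k\), where \(m\) is the largest Jordan block size for eigenvalues of modulus \(\sigma\); the polynomial factor \(k^{m-1}\) is absorbed into \((\eta/\sigma)^k\). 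This also explains why \(\eta>\sigma\) is needed in general.

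\textbf{Step 4 (diagonalizable case).} If \(W=V\,\mathrm{diag}(\mu_i)\,V^{-1}\), then \(B^k=V\,\mathrm{diag}(0,\mu_2^k,\ldots,\mu_N^k)\,V^{-1}\). This gives \(\|B^k\|_1\le\|V\|_1\|V^{-1}\|_1\sigma^k\), and hence the bound with \(\eta=\sigma\) and \(\kappa=2\|V\|_1\|V^{-1}\|_1\) (the condition number of \(V\)).

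The main point needing care is Step 1: showing that subtracting \(P\) removes exactly the eigenvalue \(1\) and leaves the rest of the spectrum unchanged. Both facts follow from the simplicity of the eigenvalue \(1\), which is guaranteed by Perron–Frobenius.
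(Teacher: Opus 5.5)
Your proposal is correct and follows essentially the same strategy as the paper: split off the simple Perron eigenvalue $1$, show that the remaining part of the iterate decays at any rate $\eta>\sigma$, and obtain the exact rate $\sigma$ in the diagonalizable case. The paper only sketches this by citing the Jordan decomposition and norm equivalence; you instead use the projector $P=R^*\mathbf{1}^\top$ and Gelfand's formula for $B=W-P$, working directly in the induced $\ell_1$ norm, which also makes explicit two points the sketch leaves implicit: the Perron component of every $R_0\in\Delta_N$ is exactly $R^*$ (via the left eigenvector $\mathbf{1}^\top$), and $\kappa_\eta$ is uniform over the simplex.
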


\begin{proof}
Since \(W\) is primitive and column-stochastic, the Perron eigenvalue \(1\) is simple and all other eigenvalues have modulus strictly less than one.
The Jordan decomposition of \(W\) implies that the non-stationary component of \(W^k R_0\) decays geometrically at any rate \(\eta\) larger than the largest non-Perron eigenvalue modulus.
Because all norms are equivalent in finite dimensions, this yields the stated \(\ell_1\) bound.
If \(W\) is diagonalizable, no polynomial Jordan factor is present, and the decay rate can be written directly in terms of \(\sigma\).
\end{proof}

\begin{definition}[Mixing Time]
For \(\varepsilon>0\), define the fixed-slot \(\varepsilon\)-mixing time as \(\tau_{\mathrm{mix}}(\varepsilon):=\min\{k\in\mathbb{N}:\sup_{R_0\in\Delta_N}\|R_k-R^*\|_1 \leq \varepsilon\}\).
\end{definition}

\begin{corollary}
Under the assumptions of the previous theorem, \(\tau_{\mathrm{mix}}(\varepsilon)\leq \log(\kappa_\eta/\varepsilon)/(-\log \eta)\leq \log(\kappa_\eta/\varepsilon)/(1-\eta)\).
\end{corollary}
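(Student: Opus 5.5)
The corollary follows directly from the preceding convergence-rate theorem, plus an elementary inequality for the logarithm.

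\textbf{Step 1: bound the mixing time.} That theorem says that for any \(\eta\in(\sigma,1)\) there is a constant \(\kappa_\eta>0\), independent of \(R_0\in\Delta_N\), with
\[
\sup_{R_0\in\Delta_N}\|R_k-R^*\|_1\leq \kappa_\eta\eta^k .
\]
Because the bound is uniform in \(R_0\), the mixing time \(\tau_{\mathrm{mix}}(\varepsilon)\) is at most the least \(k\in\mathbb{N}\) with \(\kappa_\eta\eta^k\leq\varepsilon\).

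\textbf{Step 2: solve for \(k\).} Since \(0<\eta<1\), we have \(\log\eta<0\). Taking logarithms of \(\kappa_\eta\eta^k\leq\varepsilon\) gives \(k\log\eta\leq\log(\varepsilon/\kappa_\eta)\). Dividing by the negative number \(\log\eta\) flips the inequality, so the condition becomes
\[
k\geq \frac{\log(\kappa_\eta/\varepsilon)}{-\log\eta}.
\]
Hence \(\tau_{\mathrm{mix}}(\varepsilon)\) is at most the ceiling of this quantity.

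\textbf{Step 3: handle integrality and the sign of the logarithm.}
\begin{itemize}
\item If \(\varepsilon\geq\kappa_\eta\), the condition already holds at \(k=0\). Then \(\tau_{\mathrm{mix}}(\varepsilon)=0\), which is at most any nonnegative right-hand side.
\item If \(\varepsilon<\kappa_\eta\), the stated bound without a ceiling is only correct up to rounding. I would either read \(\tau_{\mathrm{mix}}\) up to a ceiling or add \(+1\).
\item Alternatively, one can use the freedom to enlarge \(\kappa_\eta\): replacing \(\kappa_\eta\) by \(\kappa_\eta/\eta\) makes the real-valued threshold an admissible integer bound. I would remark that the statement implicitly absorbs this rounding into the constant.
\end{itemize}
This rounding issue is the only delicate point.

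\textbf{Step 4: second inequality.} Use \(\log x\leq x-1\) with \(x=\eta\). This gives \(-\log\eta\geq 1-\eta>0\), so
\[
\frac{1}{-\log\eta}\leq\frac{1}{1-\eta}.
\]
Multiplying by \(\log(\kappa_\eta/\varepsilon)\) preserves the inequality when this factor is nonnegative, i.e.\ when \(\varepsilon\leq\kappa_\eta\). In the remaining case \(\tau_{\mathrm{mix}}(\varepsilon)=0\), as noted in Step 3.
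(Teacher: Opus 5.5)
The paper states this corollary without proof. The intended argument is exactly your Steps~1, 2 and~4: invert \(\kappa_\eta\eta^k\le\varepsilon\), then use \(-\log\eta\ge 1-\eta\). Your rounding remark is also correct and goes beyond the paper. Because \(\tau_{\mathrm{mix}}\) is integer-valued, the bound without a ceiling can fail for a particular admissible \(\kappa_\eta\). Replacing \(\kappa_\eta\) by \(\kappa_\eta/\eta\) is a legitimate repair, since the theorem only asserts existence of the constant and \(\lceil x\rceil< x+1\).

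The step that fails is your handling of the case \(\varepsilon>\kappa_\eta\). There \(\log(\kappa_\eta/\varepsilon)<0\), so both right-hand sides are negative, and \(\tau_{\mathrm{mix}}(\varepsilon)=0\) does not satisfy \(0\le\log(\kappa_\eta/\varepsilon)/(-\log\eta)\). Your phrase ``at most any nonnegative right-hand side'' does not apply, because the right-hand side is not nonnegative here. The second inequality fails too: multiplying \(1/(-\log\eta)<1/(1-\eta)\) by the negative factor \(\log(\kappa_\eta/\varepsilon)\) reverses it. So for \(\kappa_\eta\) fixed independently of \(\varepsilon\), as in the theorem, the stated chain is false for every \(\varepsilon>\kappa_\eta\). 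Observing that \(\tau_{\mathrm{mix}}(\varepsilon)=0\) there, as in your Steps~3 and~4, proves nothing about the inequalities. The correct fix is to add the hypothesis \(0<\varepsilon\le\kappa_\eta\) (with whichever constant you finally use, e.g.\ \(\kappa_\eta/\eta\)), or to replace the right-hand sides by their positive parts. This costs nothing. The \(k=0\) case of the theorem already gives \(\tau_{\mathrm{mix}}(\varepsilon)=0\) whenever \(\varepsilon\ge\kappa_\eta\), so the excluded regime is exactly the trivial one.
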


\noindent
\textbf{Backward Propagation.}
Let \(s\) denote either \(s_\pi=\mathbf{1}-g(\mathcal{N})\) or \(s_\rho=r(\mathcal{P})-\mathbf{1}\), and assume \(\|s\|_\infty\leq M\).
The \(K\)-term approximation is \(\phi^{(K)}(s) := \sum_{k=1}^{K}\gamma^k E^k s\).
The residual error is \(\mathcal{D}^{(K)}:=\phi(s)-\phi^{(K)}(s)=\sum_{k=K+1}^{\infty}\gamma^k E^k s\).
Using Lemma~\ref{lemma:signal-terms}, \(\|\mathcal{D}^{(K)}\|_\infty \leq \gamma^{K+1}\|s\|_\infty/(1-\gamma)\).
Thus, to ensure \(\|\mathcal{D}^{(K)}\|_\infty\leq\varepsilon\), it suffices to choose \(K \geq \left\lceil \log(\varepsilon(1-\gamma)/M)/\log \gamma -1 \right\rceil\), with \(K\geq 0\) and \(\log\gamma<0\).
This shows exponential convergence of backward signal diffusion due to the geometric decay of \(\gamma^k\).

\noindent
\textbf{Normalization Effects.}
The normalization operator is not globally Lipschitz when \(c=0\), because the denominator can approach zero.
However, it is Lipschitz on any subset where the positive mass is bounded away from zero.

\begin{lemma}[Local Lipschitz Continuity of Normalization]
Let \(\Omega_m := \{x\in\mathbb{R}^N \mid \|\max(x,0)\|_1 \geq m\}\) for some \(m>0\).
For any \(x,y\in\Omega_m\), \(\|\Pi_+(x)-\Pi_+(y)\|_1 \leq (2/m)\|x-y\|_1\).
\end{lemma}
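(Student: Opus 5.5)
Write $u:=\max(x,0)$, $v:=\max(y,0)$, $a:=\|u\|_1\ge m$ and $b:=\|v\|_1\ge m$. The argument splits $\Pi_+(x)-\Pi_+(y)=u/a-v/b$ into a numerator difference and a denominator difference, and bounds each using that $z\mapsto\max(z,0)$ is $1$-Lipschitz.

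\textbf{Step 1 (decomposition).} Add and subtract $v/a$:
\[
\frac{u}{a}-\frac{v}{b}=\frac{u-v}{a}+v\Bigl(\frac{1}{a}-\frac{1}{b}\Bigr)=\frac{u-v}{a}+\frac{v\,(b-a)}{ab}.
\]
By the triangle inequality,
\[
\Bigl\|\frac{u}{a}-\frac{v}{b}\Bigr\|_1\le \frac{\|u-v\|_1}{a}+\frac{\|v\|_1\,|b-a|}{ab}=\frac{\|u-v\|_1+|a-b|}{a}.
\]
The second equality uses $\|v\|_1=b$.

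\textbf{Step 2 (componentwise Lipschitz).} For scalars, $|\max(x_i,0)-\max(y_i,0)|\le|x_i-y_i|$. Summing over $i$ gives $\|u-v\|_1\le\|x-y\|_1$. The reverse triangle inequality then gives $|a-b|=\bigl|\|u\|_1-\|v\|_1\bigr|\le\|u-v\|_1\le\|x-y\|_1$.

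\textbf{Step 3 (conclude).} Substituting into Step 1 and using $a\ge m$:
\[
\|\Pi_+(x)-\Pi_+(y)\|_1\le\frac{2\|x-y\|_1}{a}\le\frac{2}{m}\|x-y\|_1.
\]

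The only delicate point is the choice in Step 1 of which denominator to factor out. Adding and subtracting $v/a$, rather than $u/b$, lets $\|v\|_1=b$ cancel one factor of $b$ in $ab$, leaving $1/a$. This gives exactly the constant $2/m$ and needs only one of the two points to satisfy the lower bound $m$. The rest is routine. No convexity of $\Omega_m$ is required, since the estimate is direct rather than obtained by integrating along a segment.
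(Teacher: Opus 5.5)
Your proof is correct and follows essentially the same route as the paper: both add and subtract $\max(y,0)/a$, use that the rectifier is $1$-Lipschitz in $\ell_1$ together with $|a-b|\le\|\max(x,0)-\max(y,0)\|_1$, and let $\|\max(y,0)\|_1=b$ cancel so each term is bounded by $\|x-y\|_1/m$. Your remark that only $a\ge m$ (plus $b>0$ for well-definedness) is actually used is a correct small sharpening that the paper does not state.
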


\begin{proof}
Let \(x_+=\max(x,0)\), \(y_+=\max(y,0)\), \(a=\|x_+\|_1\), and \(b=\|y_+\|_1\).
Since \(a,b\geq m\), \(\|x_+/a-y_+/b\|_1 \leq \|x_+-y_+\|_1/a+\|y_+\|_1|1/a-1/b|\).
The rectifier is \(1\)-Lipschitz in \(\ell_1\), and \(|a-b|\leq \|x_+-y_+\|_1\).
Therefore, \(\|\Pi_+(x)-\Pi_+(y)\|_1 \leq \|x-y\|_1/m+\|x-y\|_1/m=(2/m)\|x-y\|_1\).
\end{proof}

\section{Computational Complexity Analysis}\label{sec:computational-complexity-analysis}
We analyze the computational complexity of one global RepuLink update round.
Let \(N\) denote the number of nodes, \(K\) the maximum number of backward propagation iterations, \(m_{\mathcal{F}}:=|\mathcal{F}|\) the number of sparse interaction-feedback edges, and \(m_{\mathcal{E}}:=|\mathcal{E}|\) the number of endorsement edges.
After constructing the normalized matrices, \(\operatorname{nnz}(T)\leq m_{\mathcal{F}}\) and \(\operatorname{nnz}(E)=m_{\mathcal{E}}\).

\noindent
\textbf{Trustworthiness and Feedback Signals.}
Computing the aggregated feedback vectors \(\mathcal{N}\) and \(\mathcal{P}\), and then evaluating \(g(\mathcal{N})\) and \(r(\mathcal{P})\), costs \(\mathcal{O}(m_{\mathcal{F}}+N)\), assuming feedback is stored sparsely over interaction edges.
Normalizing the trustworthiness matrix \(T\) also costs \(\mathcal{O}(m_{\mathcal{F}})\).

\noindent
\textbf{Recursive Backward Propagation.}
The propagated penalty and reward vectors are computed as \(\pi^{(K)}=\sum_{k=1}^{K}\gamma^k E^k(\mathbf{1}-g(\mathcal{N}))\) and \(\rho^{(K)}=\sum_{k=1}^{K}\gamma^k E^k(r(\mathcal{P})-\mathbf{1})\).
Each iteration requires a sparse multiplication by \(E\), costing \(\mathcal{O}(m_{\mathcal{E}})\).
Computing both \(\pi\) and \(\rho\) costs \(\mathcal{O}(K m_{\mathcal{E}})\).

\noindent
\textbf{Endorsement Update.}
Updating the nonzero entries of \(\hat{E}\) by the factor \(g(\mathcal{N}_j)r(\mathcal{P}_j)\), and then normalizing \(E\), costs \(\mathcal{O}(m_{\mathcal{E}})\).

\noindent
\textbf{Reputation Propagation and Normalization.}
Following Alg.~\ref{alg:iterative_reputation_update}, RepuLink first applies the backward adjustment \(\hat{R}^{(t+1)}=R^{(t)}-\pi+\rho\), which costs \(\mathcal{O}(N)\).
It then computes \(\tilde{R}^{(t+1)}=\alpha T^\top\hat{R}^{(t+1)}+(1-\alpha)E^\top\hat{R}^{(t+1)}\).
This step involves two sparse matrix-vector multiplications and costs \(\mathcal{O}(m_{\mathcal{F}}+m_{\mathcal{E}})\).
Normalizing the final reputation vector costs another \(\mathcal{O}(N)\).

\noindent
Therefore, the total cost of one global update round is
\(\mathcal{O}(m_{\mathcal{F}}+K m_{\mathcal{E}}+N)\).
If \(m_{\mathcal{F}}=\mathcal{O}(d_T N)\) and \(m_{\mathcal{E}}=\mathcal{O}(d_E N)\), this becomes \(\mathcal{O}((d_T+Kd_E)N)\).
When both layers have average degree \(\mathcal{O}(d)\) and \(K\) is treated as a small constant, the update is linear in \(N\) and \(d\).

\end{document}